\newcommand{\VG}{\mathit{VG}}
\newcommand{\EVG}{\mathit{EVG}}
\newcommand{\VSP}{\mathit{VSP}}
\newcommand{\Oe}{O_\epsilon}
\DeclareMathOperator{\start}{start}
\title{\MakeUppercase{Planar Visibility: Testing and Counting}}
\author{Joachim Gudmundsson\thanks{\affil{NICTA},
\email{joachim.gudmundsson@nicta.com.au}}\,
       and Pat Morin\thanks{\affil{Carleton University},
\email{morin@scs.carleton.ca}}}
\begin{document}
\maketitle
\begin{abstract}
In this paper we consider query versions of visibility testing and
visibility counting. Let $S$ be a set of $n$ disjoint line segments
in $\R^2$ and let $s$ be an element of $S$. Visibility testing is
to preprocess $S$ so that we can quickly determine if $s$ is visible
from a query point~$q$. Visibility counting involves preprocessing $S$
so that one can quickly estimate the number of segments in $S$ visible
from a query point $q$.

We present several data structures for the two query problems. The
structures build upon a result by O'Rourke and Suri (1984) who showed that
the subset, $V_S(s)$, of $\R^2$ that is weakly visible from a segment $s$
can be represented as the union of a set, $C_S(s)$, of $O(n^2)$ triangles,
even though the complexity of $V_S(s)$ can be $\Omega(n^4)$. We define
a variant of their covering, give efficient output-sensitive algorithms
for computing it, and prove additional properties needed to obtain
approximation bounds. Some of our bounds rely on a new combinatorial
result that relates the number of segments of $S$ visible from a point $p$
to the number of triangles in $\bigcup_{s\in S} C_S(s)$ that contain $p$.
\end{abstract}

\section{Introduction}

Let $S$ be a set of $n$ closed line segments whose interiors are pairwise
disjoint.  Two points $p,q\in\R^2$ are (mutually) \emph{visible} with
respect to $S$ if the open line segment $pq$ does not intersect any
element of $S$.  A segment $s\in S$ is \emph{visible} (with respect to
$S$) from a point $p$ if there exists a point $q\in s$ such that $p$
and $q$ are visible.  If two objects (points, segments) $A$ and $B$ are
visible (with respect to $S$), then we say that $A$ and $B$ \emph{see}
each other (w.r.t. $S$).  In this paper we consider the following two
problems:

\begin{prb}[Visibility testing]
  Given a query point $p$ and a segment $s\in S$, determine if $p$
  sees $s$.
\end{prb}

\begin{prb}[Visibility counting]
  Given a query point $p$, report the number of segments of $S$ visible
  from $p$.
\end{prb}

For a point $p\in\R^2$, the \emph{visibility region} or \emph{visibility
polygon} of $p$ (w.r.t. $S$) is defined as (see \figref{fig:VizRegion}.a):
\[
   V_S(p)=\{q\in\R^2:\mbox{$p$ and $q$ are visible (w.r.t. $S$)}\}
      \enspace .
\]
The visibility region of a point is star-shaped, has $p$ in its
kernel, and has size $O(n)$. It can be computed in $O(n\log n)$ time
by sorting the endpoints of segments in $S$ radially around $p$ and
then processing these in order using a binary search tree that orders
segments by the order of their intersections with a ray emanating from $p$
\cite{a85,so84}. (Equivalently, one can compute the lower-envelope of $S$
in the polar coordinate system whose origin is $p$.)  Because $V_S(p)$
is star-shaped with $p$ in its kernel it is easy to determine if a query
point $q$ is contained in $V_S(p)$ in $O(\log n)$ time using binary
search. In this way, one can consider $V_S(p)$ as an $O(n)$ sized data
structure that can test, in $O(\log n)$ time, if a query point $q$
sees $p$.

For a segment $s\in S$, the \emph{visibility region} of $s$ (with respect
to $S$)
\[
   V_S(s) = \bigcup_{q\in s} V_S(q)
          = \{p\in\R^2:\mbox{$s$ and $p$ are visible (w.r.t. $S$)} \}
\]
is the set of points in $\R^2$ that see (at least some of) $s$, see
\figref{fig:VizRegion}.b.  Unlike the visibility region of a point,
the visibility region of a segment is a complicated structure.
For a segment $s$, $V_S(s)$ can have combinatorial complexity
$\Omega(n^4)$ and $\R^2\setminus V_S(s)$ can have $\Omega(n^4)$
connected components \cite[Figure~8.13]{o87}\cite[Lemma~12]{fhjmz08},
see also \figref{quartic}.

More troublesome than the worst-case complexity of $V_S(s)$ is that there
exist sets $S$ of $n$ line segments where, for most of the elements
$s\in S$, the complexity of $V_S(s)$ is $\Omega(n^2)$.  Therefore,
explicitly computing $V_S(s)$ and preprocessing it for point location
does not yield a particularly space-efficient data structure for testing
if a query point $p$ sees $s$, even if $s$ is a ``typical'' (as opposed
to worst-case) element of $S$.

In this paper we propose efficient data structures that use an old
result of Suri and O'Rourke \cite{so84} which shows that $V_S(s)$ can
be represented as a set of $O(n^2)$ triangles whose union is $V_S(s)$.
We define a variant of their covering, give efficient algorithms for
computing it, and prove additional properties of the covering. In
particular, we define a covering $C_S(s)$ of $V_S(s)$ by triangles.
We prove that for a randomly chosen $s\in S$, the expected size of
$C_S(s)$ is $O(n)$.  This, of course, implies that $|\bigcup_{s\in S}
C_S(s)|=O(n^2)$.  Additionally, if we define $C(S)=\bigcup_{s\in S}
C_S(s)$, then we prove that the number of triangles of $C(S)$ containing
any point $p$ is a $2$-approximation to the number of segments of $S$
visible from $p$.

Applications of these results include efficient data structures for
testing if a query point is contained in $V_S(s)$ as well as efficient
data structures for estimating the number of points of $S$ visible
from a query point.  In order to express our results more precisely,
we need some further definitions.

\begin{figure}
  \begin{center}
    \includegraphics[width=11cm]{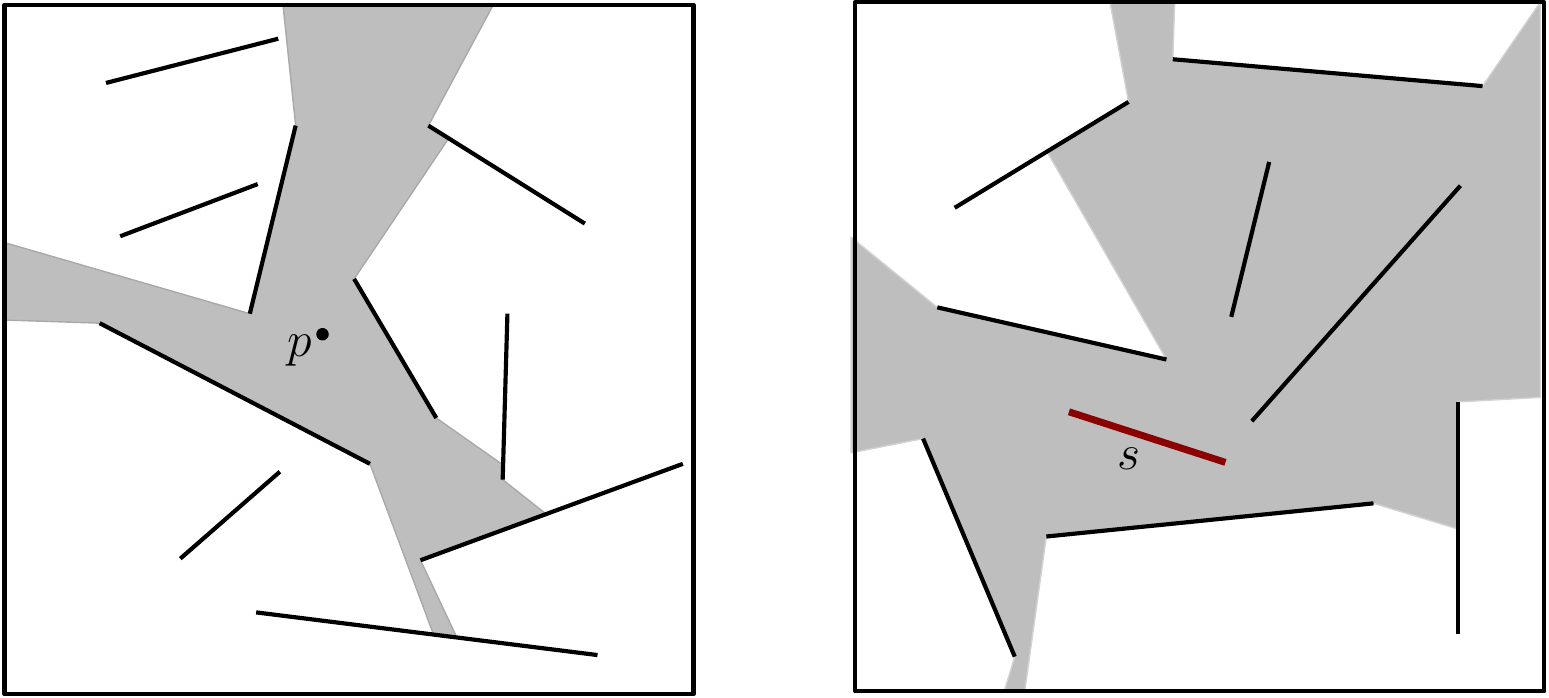}
    \caption{(a) The visibility region for a point and (b) The visibility region of a line segment.}
  \figlabel{fig:VizRegion}
  \end{center}
\end{figure}

\subsection{Visibility Graphs and Extended Visibility Graphs}

The \emph{visibility graph} $\VG(S)$ is a graph whose vertices are the
$2n$ endpoints of the segments in $S$ and in which the edge $pq$ exists
if and only if the open line segment with endpoints $p$ and $q$ does
not intersect any (closed) segment in $S$. (see \figref{vg}.a).
It is well-known that the number of edges $m$ of $\VG(S)$ is in $O(n^2)$.
Ghosh and Mount~\cite{gm91} give an optimal $O(n\log n+ m)$
time algorithm to compute the visibility graph of a set of $n$ disjoint
line segments.  Here, and throughout the remainder of the paper, $m=m(S)$
is the number of edges of $\VG(S)$.

\begin{figure}
  \begin{center}
    \includegraphics{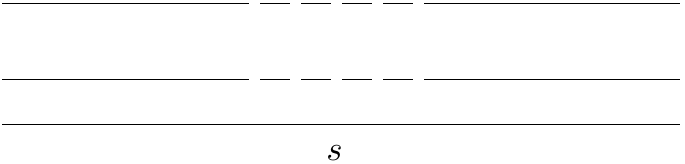}
  \end{center}
  \caption{An example of a set $S$ where $V_S(s)$ has complexity
   $\Omega(n^4)$. The $O(n)$ segments in the center define $\Omega(n^2)$
   visibility graph edges whose extensions intersect in $\Omega(n^4)$ points.}
  \figlabel{quartic}
\end{figure}

Assume, w.l.o.g., that no segment in $S$ is vertical, so
we can say that a point $p$ is \emph{above} a segment $s\in S$ if $p$ is
above the line that contains $s$.  Assume, furthermore, that $S$ contains
four segments that define a rectangle that contains all the elements of $S$
in its interior.
The first assumption can be ensured by performing a
symbolic rotation of $S$.  The second assumption is only used to ensure
that all visibility regions that we discuss are bounded.

The \emph{extended visibility graph} $\EVG(S)$ is obtained by adding
$2m$ edges and at most $2m$ vertices to $\VG(S)$ as follows (see \figref{vg}.b):
For each (directed) edge $uv$ in $\VG(S)$, extend a
segment $e_{uv}$ from $v$ in the direction $\overrightarrow{uv}$ until it
intersects an element of $S$ at some point $w$.  If not already present,
then add the vertex $w$ to $\EVG(S)$ and add the edge $vw$ to $\EVG(S)$.
The extended visibility graph can be computed in $O(n\log n + m)$ time
using the visibility graph algorithm by Ghosh and Mount~\cite{gm91}.

The union of the edges of $\EVG(S)$ and the segments in $S$ form a
1-dimensional set whose removal disconnects $\R^2$ into a set of
2-dimensional regions.  This set of 2-d regions is known as the
\emph{visibility space partition}, $\VSP(S)$ of $S$.  The regions
of $\VSP(S)$ are important because for any region $R\in\VSP(S)$
and for any $p,q\in R$ the set of segments of $S$ visible from $p$ is equal to
the set of segments of $S$ visible from $q$.  The region of $\VSP(S)$ that
contains $p$ determines all the combinatorial information about $V_S(p)$.

Note that $\VSP(S)$ is defined by $O(n^2)$ lines, rays, and segments
and therefore has worst-case complexity $O(n^4)$.

\begin{figure}
  \begin{center}
    \begin{tabular}{cc}
    \includegraphics[width=5cm]{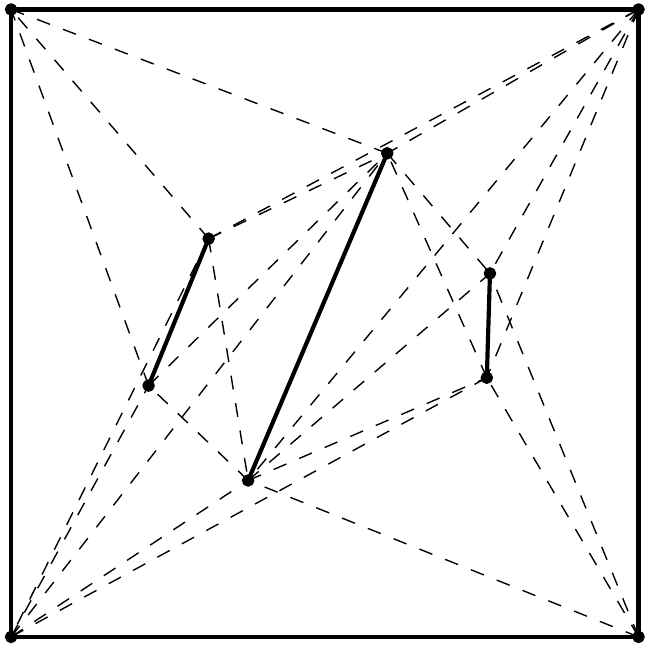} & \includegraphics[width=5cm]{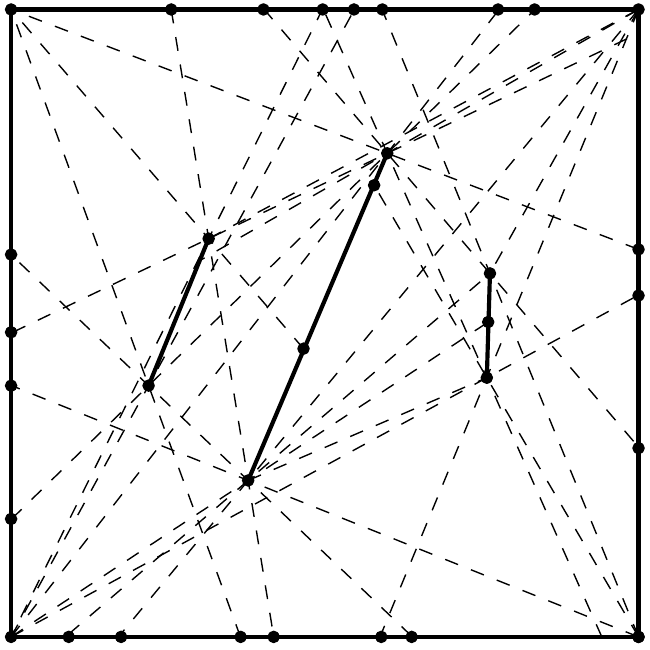} \\
    (a) & (b)
    \end{tabular}
  \end{center}
  \caption{The visibility graph and the extended visibility graph of a set
       of seven line segments. (Segments are bold, graph edges are dashed.)}
  \figlabel{vg}
\end{figure}

\subsection{Previous Work}

There is a plethora of work on visibility in the plane.  This section
discusses only some of the work most relevant to the current paper.

The visibility space partition is bounded by a subset of the $O(n^2)$ lines
induced by pairs of endpoints in $S$. The  $\VSP(S)$ has complexity $O(m^2)$
where $m$ is the number of edges in $\VG(S)$ and can be computed in $O(m^2)$
time after constructing $\VG(S)$ using standard algorithms.

By preprocessing $\VSP(S)$ with a point location structure and augmenting
the regions of $S$ with appropriate information, one obtains an $O(m^2)$
size data structure that can answer visibility testing queries and visibility
counting queries in $O(\log n)$ time.


If the segments of $S$ are the edges of a simple polygon then Bose \etal\
\cite{blm02} and Guibas \etal\ \cite{gmr97} show that the complexity of
$\VSP(S)$ is only $O(n^3)$.  In this case, this immediately solves the
two problems using a structure of size $O(n^3)$.  Aronov \etal\
\cite{agtz02} give a data structure that reduces the space to $O(n^2)$
but increases the $O(\log n)$ query time term to $O(\log^2 n)$, again
for the case where segments of $S$ are the edges of a simple polygon.

Pocchiola and Vegter~\cite{pv96} give an $O(m)$ space data structure,
the \emph{visibility complex}, that can compute the visibility
polygon $V_S(p)$ from any query point $p$ in $O(m_p \log n)$ time, where
$m_p$ is the complexity of $V_S(p)$. When the segments of
$S$ define a polygon with $h$ holes then Zarei and Ghodsi~\cite{zg-ecqpv-05}
give an $O(n^3)$ space data structure that can compute the visibility
polygon $V_S(p)$ in $O(m_p \log n)$ time and the query time of their
structure is $O(\min\{h, m_p\} \log n+m_p)$, which improves the query
time of Pocchiola and Vegter when $h \ll m_p$.

Motivated by the computer graphics problem of estimating \emph{a priori}
the savings to be had by applying a visibility culling algorithm,
Fischer \etal\ \cite{fhjmz08,fhjmz09} give approximation algorithms for
Problem~2.  They present two approximation data structures for visibility
counting. One structure uses a $(r/m)$-cutting \cite[Section~4.5]{m02}
of the $\EVG(S)$ to obtain a data structure of size $O((m/r)^2)$ that answers
queries in $O(\log n)$ time and approximates the visibility count up to
an absolute error of $r$.  Another structure uses random sampling to
obtain a data structure of size $(m^2\log^{O(1)} n)/\ell$, that has query
time $\ell\log^{O(1)} n$, and that approximates the visibility count up to
an absolute error of $\delta n$ for any constant $\delta > 0$.  (Note that
$\delta$ affects the leading constants of both the query time and space
requirements.)

\subsection{New Results}

In the current paper we revisit O'Rourke and Suri's proof that, for any
$s\in S$, there exists a set $C_S(s)$ of $O(m_s)$ triangles whose union
is $V_S(s)$, where $m_s$ is the number of edges of $\EVG(S)$ incident
on $s$.  We show that this covering has the additional property that if
we take the $O(m)$ size set $C(S)=\bigcup_{s\in S}C_S(s)$ of triangles,
then the number of triangles containing any point $p\in\R^2$ is a
$2$-approximation to the number of segments of $s$ that are visible
from~$p$.\footnote{In fact, O'Rourke and Suri's covering is a
$3$-approximation.  The slightly modified version we describe in this
paper is a $2$-approximation.}

These triangle-covering results have several applications that are
obtained by storing the resulting triangles in a layered partition tree.
Here, and throughout the remainder of the paper, $\epsilon > 0$ is a
constant that can be made arbitrarily small. To reduce clutter, we use the
notation $\Oe(f(n))=O(f(n)n^{\epsilon})$.

\subsubsection{Visibility testing}

By storing the elements of $C_S(s)$ in a partition tree, we obtain, for
any $k$ with $m_s\le k\le m_s^2$, an $O(k)$ space data structure that can
test, in $\Oe(m_s/\sqrt{k})$ time, if a query point $p$ is contained in
$V_S(s)$.  Barring a major breakthrough on Hopcroft's Problem \cite{e96},
this result is likely only a factor of $O(n^\epsilon)$ from the optimal.
See Section~\ref{sec:VizTesting}.

For comparison, the best previously described structure
for this problem, as used within the results of Fischer \etal\
\cite{fhjmz08,fhjmz09}, has size $O(m_{s}^2/\ell)$ and answers queries
in $O(\ell\log n)$ time, where $\ell \ge 1$ is a space/time tradeoff
parameter of the data structure.  Taking $\ell=\sqrt{n}$ yields a space
of $O(m_s^{3/2})$ and a query time of $O(\sqrt{m_s}\log n)$.  On the other
hand, taking $k=m_s^{3/2}$ in our data structure yields an $O(m_s^{3/2})$
space data structure with query time $\Oe(m_s^{1/4})$.

\subsubsection{Visibility Counting --- Relative Approximation}

By putting all the triangles of $C(S)$ into a partition tree, we obtain
a data structure that can $2$-approximate the number of segments of $S$
visible from any query point.  For any $k$ with $m\le k\le m^2$, this
structure has size $O(k)$ and answers queries in time $O(m/\sqrt{k})$.  The
structure returns a visibility count $m'_p$ that satisfies $m_p \le m'_p\le
2m_p$. See Section~\ref{sec:VizCountingRel}.

\subsubsection{Visibility Counting --- Absolute Approximation}

Using a selective random sampling of the segments in $S$, we obtain a data
structure of size $\Oe((cm/n)(cn)^\alpha)=\Oe(n^{1+\alpha})$
that approximates the number of segments of $S$ visible from any query point in
time $\Oe(c(m/n)^{(1/2)(1-\alpha)})=\Oe(cn^{(1/2)(1-\alpha)})$,
for any given constants $c, \delta >0$ and $0\leq \alpha \leq 1$.
With probability at least $1-n^{\Omega(\delta^2 cn/m_p)}$, the structure returns
a value $m''_p$ such that $m_p - c/n-\delta n \le m''_p\le m_p+\delta n$.
This data structure is described in \secref{VizCountingAbs1}.

Using random sampling in a different manner, we obtain a space versus query
time tradeoff. For any $k$ with $m/n \le k \le (m/n)^2$, we obtain a
structure of size $\Oe(k)$ and query time $\Oe(m/(n{\sqrt k}))$.  This
structure returns a visibility count $m''_p$ that satisfies $m_p-\delta n
\le m''_p \le 2m_p + \delta n$. The details can be found in Section~\ref{sec:VizCountingAbs2}.

\begin{figure} [tbh]
  \begin{center}
    \begin{tabular}{ccc}
      \includegraphics[width=4cm]{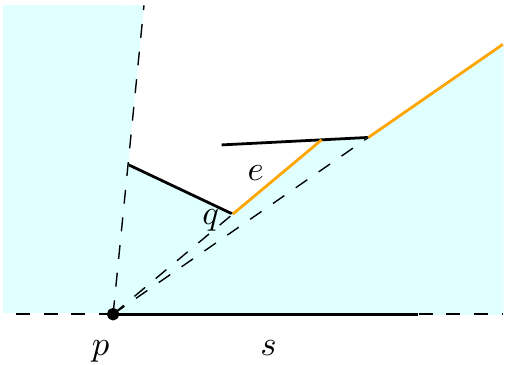} &
      \includegraphics[width=4cm]{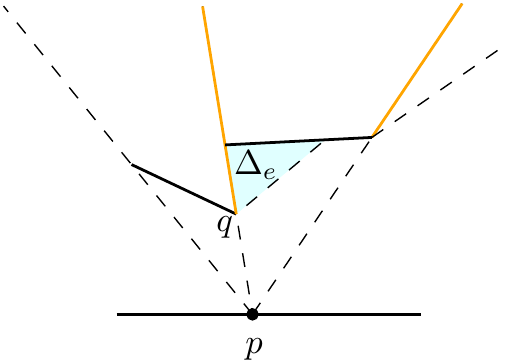} &
      \includegraphics[width=4cm]{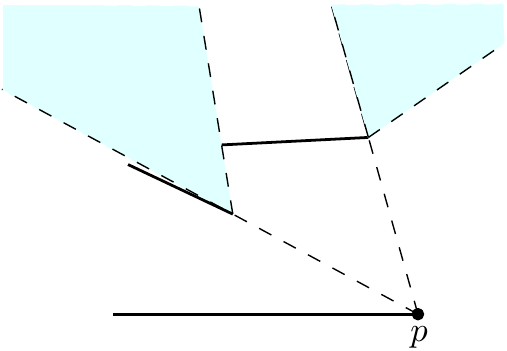} \\
      (a) & (b) & (c)
    \end{tabular}
  \end{center}
  \caption{The algorithm for covering $V^+_S(s)$ with triangles processes
           the events at (a)~$p_1$, (b)~$p_2$ and (c)~$p_3$.
           Active edges are shown in orange and triangles in the covering
           are shown at the time they are added to the covering.}
  \figlabel{alg}
\end{figure}

These results compare favourably with those of Fischer \etal\
\cite{fhjmz08,fhjmz09}.  Their cutting-based data structure, with
parameter $r=\delta n$, gives an absolute error of $\delta n$, uses
space $O((m/n)^2)$ and has a query time of $O(\log n)$.  Their random
sampling-based data structure, with parameter $\ell=\sqrt{n}$, gives
a data structure of size $(m^2\log^{O(1)} n)/\sqrt{n}$ with query time
$\sqrt{n}\log^{O(1)} n$.

The remainder of the paper is organized as follows:  \Secref{covering}
proves results on covering visibility regions with triangles.
\Secref{applications} applies these results to obtain new results on
visibility testing and counting. \Secref{conclusions} summarizes and
concludes with open problems.

\section{Covering $V_S(s)$}
\seclabel{covering}

In this section we give an algorithm for covering the visibility region
$V_S(s)$ with a set $C_S(s)$ of triangles.  The resulting covering is
similar to the covering given by Suri and O'Rourke \cite{so84}, the main
difference being around the triangles adjacent to the endpoints of $s$.
However, our exposition, and our algorithm for computing $C(S)$ are more
$s$-centric.  This leads to efficient output-sensitive algorithms for
constructing $C_S(s)$, rather than the worst-case optimal $O(n^2)$
algorithm obtained by Suri and O'Rourke \cite{so84}.
The number of triangles used in $C_S(s)$ is bounded by $O(m_s)$ where
$m_s$ is the number of edges of $\EVG(S)$ that are incident to $s$.

We will show how to cover the portion $V^+_S(s)\subseteq V_S(s)$ in the
halfplane bounded from below by the supporting line of $s$ with a set
$C^+_S(s)$ of triangles.  The complementary part $V^-_S(s)=V_S(s)\setminus
V^+_S(s)$ can be covered with a set $C^-_S(s)$ using a symmetric algorithm.

The covering algorithm works by sweeping a point $p$ from left to right
along the segment~$s$.  Events in this sweep occur at the vertices
$p_1,\ldots,p_{m'_s}$ of $\VSP(S)$ incident on $s$, in their left to right
order, so that $p_1$ and $p_{m'_s}$ are the left and right endpoints,
respectively, of $s$. 

Let $e$ be an edge of $V^+_S(p)$ that is collinear with $p$ and such that the
interior of $V^+_S(p)$ is to the right of $e$. We call such an edge an
\emph{active edge} of $V^+_S(p)$.  Active edges are important because, as $p$
moves to the right, they uncover regions of $\R^2$ which may not have been
previously visible.  See \figref{alg}.a.

Let $q$ be the lower endpoint of an active edge $e$ and note that $q$
is an endpoint of some segment in $S$. Consider what happens to $e$
as the viewpoint $p$ moves left to right along $s$, but does not cross
any edge of $\EVG(S)$ collinear with $q$.  As $p$ moves left to right,
the edge $e$ remains collinear with $p$ and $q$ and sweeps over a
triangle $\Delta_e$ whose lowest vertex is $q$.  This continues until
 the point $p$ reaches an edge of $\VSP(S)$ incident on $q$.
See \figref{alg}.b.

Algorithmically, the cover $C^+_S(s)$ is constructed as follows: Initially
$p=p_1$ is the left endpoint of $S$.  We compute the visibility polygon
$V^+_S(p)$, whose boundary is a sequence of $2m_p=O(n)$ edges that alternate
between subsegments of the elements of $S$ and segments collinear with $p$
and an endpoint of an element of $S$.  This polygon can be covered in a
natural way with $m_p$ non-overlapping triangles, each of which has $p$
as a vertex (see \figref{alg}.a). These $m_p$ triangles are added to
$C^+_S(s)$.  After computing $V^+_S(p)$ we identify its active edges, and with
each active edge $e$ we store the value $\start(e)=p_1$.

Next, we sweep $p$ from left to right, pausing at the vertices
$p_2,\ldots,p_{m'_s}$ as we go.  Upon reaching a vertex $p_i$, we process
the edges of $\EVG(S)$ incident on $p_i$ one at a time.  \footnote{For
segments in sufficiently general position, $p_i$, $1<i<m_s$ will be
incident to only one edge of $\EVG(S)$, but the covering algorithm does
not require this.} Let $e'$ be an edge of $\EVG(S)$ incident on $p_i$. If
$e'$ is collinear with an active edge $e$ of $V^+_S(p)$ then we generate a
new triangle $\Delta_e$ for $C^+_S(s)$.  The lowest vertex of $\Delta_e$
is the lower endpoint $q$ of $e$. $\Delta_e$ is bounded by two lines
$\ell_1, \ell_2$, both of which contain $q$, and where $\ell_1$ contains
$p_i$ and $\ell_2$ contains $\start(e)$.  The third side of $\Delta_e$
is bounded by the segment in $S$ incident on $e$ and furthest from $p_i$.
See \figref{alg}.b.

Finally, the visibility polygon $V^+_S(p)$ is updated in the neighbourhood
of $e$, which possibly creates up to two new active edges incident to $q$.
Each new active edge $f$ is marked as active and we set $\start(f)=p_i$.
The exact nature of this update depends on the relative locations of the
two segments that define $e'$.  The three possible cases are illustrated
in \figref{cases}.

Note that an important event, but which requires no special handling, occurs
at the right endpoint of $s$ when $p=p_{m_s}$.  In this case, each active
edge of $V^+_S(p)$ generates a triangle that is added to the set $C^+_S(s)$.
See \figref{alg}.c.

We now prove the correctness, construction time and approximation bound
of the above algorithm.

\begin{lem}\lemlabel{cover}
Let $C^+_S(s)$ be the set of triangles generated by the above
algorithm.  Then $\cup C^+_S(s) = V^+_S(s)$ and $|C^+_S(s)|\le m_s$ where
$m_s$ is the number of edges of $\VSP(S)$ incident on~$s$.
\end{lem}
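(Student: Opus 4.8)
The plan is to establish the two claims separately, both by induction on the sweep events $p_1,\ldots,p_{m'_s}$. For the covering equality, I would maintain the invariant that, just after processing event $p_i$, the union of all triangles generated so far together with the triangles that will be swept out by the currently active edges (extended as $p$ ranges over $[p_i,p_{m'_s}]$) equals $V^+_S(\text{sweep so far}) \cup (\text{future contribution})$; more concretely, I would prove the cleaner statement that for each active edge $e$ with $\start(e)=p_j$, the region of $V^+_S(s)$ ``behind'' $e$ that first becomes visible at some viewpoint in $[p_j, p_{m'_s}]$ is exactly covered by the union of the triangle $\Delta_e$ eventually emitted for $e$ and the triangles emitted for its descendant active edges. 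The base case is the initial triangulation of $V^+_S(p_1)$ into $m_{p_1}$ triangles with apex $p_1$, which clearly covers $V^+_S(p_1)$. The inductive step is the geometric heart: at event $p_i$ an active edge $e$ (collinear with an edge $e'$ of $\EVG(S)$) stops sweeping, and I must check that $\Delta_e$ — bounded by the two lines through $q$ toward $\start(e)$ and toward $p_i$, and capped by the far segment incident to $e'$ — is precisely $\bigcup_{p\in[\start(e),p_i]}(\text{wedge of }V^+_S(p)\text{ immediately right of }e)$, and that the up-to-two new active edges created by the update (the three cases of \figref{cases}) correctly account for the part of the plane newly uncovered at $p_i$. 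This reduces to the observation stated before the lemma: between consecutive $\VSP(S)$-events on $q$, the active edge through $q$ really does sweep a triangle with lowest vertex $q$, one side on $S$, and the other two sides rotating about $q$.

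For the counting bound $|C^+_S(s)|\le m_s$, I would set up an injection from the triangles of $C^+_S(s)$ into the edges of $\VSP(S)$ incident on $s$. Each triangle is emitted in one of three ways: (i) one of the $m_{p_1}$ initial triangles at the left endpoint; (ii) one of the triangles $\Delta_e$ emitted when an active edge $e$ is killed at some vertex $p_i$, $1<i<m'_s$; (iii) one of the triangles emitted at the right endpoint $p_{m'_s}$ for each still-active edge. The natural charging scheme is to charge each triangle to the $\VSP(S)$-edge (equivalently, the $\EVG(S)$-edge or the bounding segment of $s$'s visibility polygon) that is the active edge $e$ ``at its moment of creation as an active edge'': a type-(i) or type-(iii) triangle is charged to the active edge itself (as it exists along $s$, these correspond to distinct vertices $p_i$ incident on $s$ via their $\start$ values or their death points), and a type-(ii) triangle $\Delta_e$ is charged to the edge $e'\in\EVG(S)$ incident on $p_i$ that caused it. The key point is that every active edge $e$ that ever exists has a well-defined birth event ($\start(e)\in\{p_1,\ldots,p_{m'_s}\}$, a $\VSP(S)$-vertex on $s$) and a well-defined death event, and it generates exactly one triangle over its lifetime; so the number of triangles equals the number of distinct active edges that ever appear, and each such active edge is ``caused'' by a distinct edge of $\VSP(S)$ incident on $s$ (the one it lies on, or — for edges born from an update at $p_i$ — one of the $\le 2$ edges of $\EVG(S)$ at $p_i$, which are in bijective correspondence with the $\VSP(S)$-edges incident on $s$ at $p_i$). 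Summing over all events and noting that each vertex $p_i$ on $s$ contributes edges that are counted once gives $|C^+_S(s)|\le m_s$.

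I expect the main obstacle to be the inductive step of the covering argument, specifically the bookkeeping in the three update cases of \figref{cases}: when the killed active edge $e$ is replaced by zero, one, or two new active edges, I must verify that no point of $V^+_S(s)$ is left uncovered (nothing ``falls through the cracks'' between the old triangle $\Delta_e$ and the regions the new active edges will later sweep) and, conversely, that $\Delta_e$ does not spill outside $V^+_S(s)$ — i.e., the capping segment really is the correct far segment and no nearer segment of $S$ intrudes into $\Delta_e$. This last point follows because $\Delta_e$ lies within the union of the visibility polygons $V^+_S(p)$ for $p$ in the relevant subinterval of $s$, and each of those is obstruction-free by definition; making that containment precise, using that $e$ remains an edge of $V^+_S(p)$ throughout $[\start(e),p_i]$, is the crux. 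The counting claim, once the life-cycle of active edges is pinned down, is then essentially a routine charging argument.
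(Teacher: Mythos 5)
Your proposal is sound in outline, and your counting argument is essentially the paper's: the paper bounds the initial triangles by the degree of $p_1$ in $\VSP(S)$ and the triangles emitted at each later event $p_i$ by the number of $\VSP(S)$-edges incident to $p_i$, then sums over the vertices on $s$; your injection from active edges to the $\VSP(S)$-edges that create or kill them is the same charge, just bundled differently. Where you genuinely diverge is the covering direction $\cup\, C^+_S(s)\supseteq V^+_S(s)$. You propose a forward induction over the sweep events, maintaining an invariant about each active edge and its ``descendants,'' which forces you into exactly the case analysis of \figref{cases} that you correctly flag as the main obstacle. The paper avoids this entirely with a pointwise argument: fix $r\in V^+_S(s)$ not seen from $p_1$ and take the point $p'\in s$ of \emph{minimum} $x$-coordinate that sees $r$; minimality forces $p'$ and $r$ to be collinear with an endpoint $q$ of some segment, with $q$ on the segment $p'r$, so $q$ is the lower endpoint of an active edge $e$ of $V^+_S(p')$ with $\start(e)$ to the left of $p'$. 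The triangle $\Delta_e$ that $e$ eventually emits at its death event $p_i$ (which lies at or to the right of $p'$) is the wedge at $q$ between the lines through $\start(e)$ and through $p_i$, capped by the far segment, and hence contains $r$. This buys a proof with no induction and no per-case bookkeeping: the only facts needed are that every active edge eventually emits its triangle and that $\Delta_e$ is the full wedge swept over $[\start(e),p_i]$. Your $\subseteq$ direction (every point of $\Delta_e$ is seen from some viewpoint between $\start(e)$ and $p_i$) coincides with the paper's. So either be prepared to do real work in the three update cases of your induction, or switch to the leftmost-viewpoint argument, under which that work disappears.
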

\begin{proof}
To prove the bound on the size, first observe that the initial visibility
polygon $V^+_S(p_1)$ has size that is bounded by the degree of $p_1$ in
$\VSP(S)$.  Furthermore, at each event point $p_i$, $i>1$, the number of
triangles added to $C^+_S(s)$ is at most the number of edges of $\VSP(S)$
incident to $p_i$.  Therefore, the total number of triangles in $C^+_S(s)$
is at most the number of edges of $\VSP(S)$ incident on $s$.

The fact that $\cup C^+_S(s)\subseteq V^+_S(s)$ follows immediately from
the easily verifiable fact that each triangle added to $C^+_S(s)$ contains
only points visible from some point on $p\in s$. In particular, for any
point $r$ in the triangle $\Delta_e$ that is added to $V^+_S(s)$ when
processing $p_i$, there is a point $q$ in the subsegment of $s$ between
$\start(e)$ and $p_i$ that sees $r$.

To prove that $C^+_S(s)$ covers $V^+_S(s)$, consider a point $r\in
V^+_S(s)$.  If $r$ is visible from $p_1$ then $r$ is contained in
one of the triangles added during the initialization of the algorithm.
Otherwise, there exists some point $p'\in s$ with minimum $x$-coordinate
such that $r$ is visible from $p'$.  It follows that $p'$ and $r$
are collinear with a vertex $q$ of some segment $s'\in S$ and that $q$
is on the segment $p'r$ (see \figref{cover-proof}.a).  Then $q$ is an
endpoint of an active edge $e$ of $V^+_S(p')$ with $\start(e)$ to the
left of $p'$.  Since every active edge eventually adds a triangle to
$C^+_S(s)$, there is some $p_i$ to the right of $p'$ that adds a triangle
$\Delta_e$ to $C^+_S(s)$ that contains $r$ (see \figref{cover-proof}.b).
Since this is true for every point $r\in V^+_S(s)$, we conclude that
$\cup C^+_S(s)\supseteq V^+_S(s)$, and hence $\cup C^+_S(s)= V^+_S(s)$.
\end{proof}

\begin{figure}
  \begin{center}
    \begin{tabular}{cc}
      \includegraphics{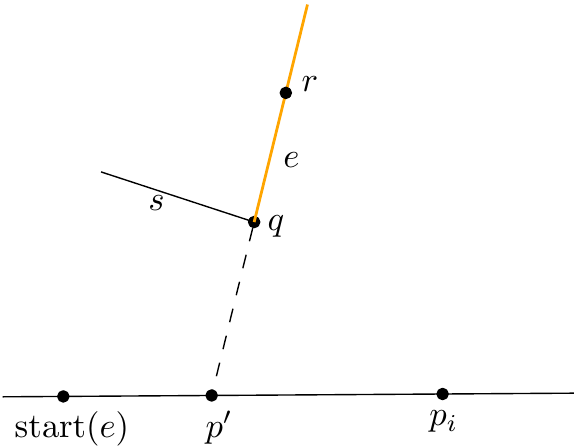} &
      \includegraphics{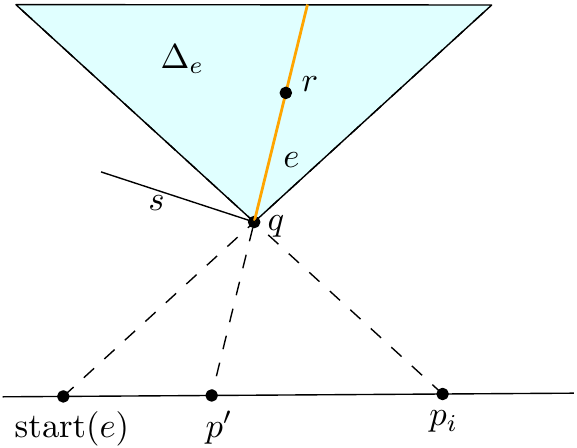} \\
      (a) & (b)
    \end{tabular}
  \end{center}
  \caption{Proving that $C^+_S(s)$ covers $V^+_S(s)$.}
  \figlabel{cover-proof}
\end{figure}

\begin{lem}\lemlabel{cover-algorithm}
  Let $S$ be a set of $n$ disjoint line segments.
  The covering $C_S(s)$ can be computed in
  \begin{enumerate}
     \item $O(m_s)$ time if we are given $\EVG(S)$ or
     \item $\Oe(n+(m_s n)^{2/3})$ otherwise.
  \end{enumerate}
\end{lem}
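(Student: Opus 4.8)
The plan is to reduce both parts to two ingredients feeding the sweep algorithm described above: the visibility polygon $V^+_S(p_1)$ at the left endpoint $p_1$ of $s$, and the left-to-right sequence $p_1,\dots,p_{m'_s}$ of events on $s$, each annotated with the edges of $\EVG(S)$ incident on it. Once these are in hand the sweep itself runs in $O(m_s)$ time: $V^+_S(p_1)$ has size $O(m_s)$, since each of its windows has its near end at an endpoint $q$ for which $p_1q$ is a visibility-graph edge incident on $s$ and distinct windows arise from distinct such edges; there are $m'_s=O(m_s)$ events; and the work at event $p_i$ is proportional to the number of $\EVG(S)$-edges incident on $p_i$, so these sum to $O(m_s)$. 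To hit this bound I would maintain $V^+_S(p)$ as a doubly-linked list of edges and a dictionary mapping each segment endpoint $q$ to the active edge, if any, with lower endpoint $q$; then the test whether an incident edge $e'$ is collinear with an active edge, and each of the three update cases of \figref{cases} (including the assignment of $\start(\cdot)$ to newly created active edges), runs in $O(1)$ time. The covering of $V^-_S(s)$ is symmetric, so everything reduces to producing these two ingredients.

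For part~(1), $\EVG(S)$ is given. The vertices that $\VSP(S)$ has on $s$ are exactly the two endpoints of $s$ and the landing points on $s$ of the extension edges of $\EVG(S)$, and every edge of $\EVG(S)$ incident on $s$ is incident on one of these; so collecting these $O(m_s)$ edges with their endpoints on $s$, ordering the latter along $s$, and assembling $V^+_S(p_1)$ from the visibility-graph edges at $p_1$ and their extensions all takes $O(m_s)$ time, and feeding the result to the sweep gives the bound.

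Part~(2)---recovering the same information without constructing all of $\EVG(S)$---is the substance of the lemma, and is needed precisely because $m$ may be $\Theta(n^2)$ while $m_s$ is far smaller. I would compute $V_S(p_1)$ and $V_S(p_{m'_s})$ by radial sweeps in $O(n\log n)=\Oe(n)$ time, yielding $V^+_S(p_1)$, the $\EVG(S)$-edges incident on the endpoints of $s$, and the initial active edges. For the interior events I would use that no edge of $\EVG(S)$ crosses the relative interior of $s$---every visibility-graph edge and every extension edge has relative interior disjoint from every segment of $S$, so it can meet $s$ only at an endpoint---so each interior event is the first point on a segment of $S$ reached by the extension, beyond some segment endpoint $q$, of a visibility-graph edge $uq$. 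There are $O(m_s)$ such events, and the one from $(u,q)$ is found by shooting the ray from $q$ in direction $\overrightarrow{uq}$ against the $n$ segments of $S$. Answering these $O(m_s)$ ray-shooting queries in one batch with standard partition-tree and cutting machinery costs $\Oe((m_sn)^{2/3}+m_s+n)$; since $m_s=O(n^2)$ we have $m_s=O((m_sn)^{2/3})$, and after an $\Oe(m_s)$ sort along $s$ and a merge with the endpoint events we run the sweep, for a total of $\Oe(n+(m_sn)^{2/3})$.

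The hard part will be that the $O(m_s)$ ray-shooting queries in part~(2) are not known in advance: the endpoint $q$ and direction $\overrightarrow{uq}$ of an interior event become available only once the portion of $V^+_S(p)$ to its left has been computed, because such events are produced by active edges created at earlier events. I would handle this with a doubling scheme---run the sweep in $O(\log n)$ phases, phase $j$ advancing until $2^j$ events have been processed and resolving the at most $2^j$ queries pending in phase $j$ as one batch---so that the geometrically growing batch costs sum to within a constant factor of one batch of $m_s$ queries. An alternative is to argue directly that all candidate pairs $(q,\overrightarrow{uq})$ are determined by $V_S(p_1)$, $V_S(p_{m'_s})$ and the $O(m_s)$-edge arrangement of visibility-graph edges incident on $s$, and to charge each candidate to a distinct such edge. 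Either way one must still verify that the incremental bookkeeping stays $O(1)$ amortized per event, which---together with the routine but fiddly case analysis of \figref{cases}---is the only delicate point.
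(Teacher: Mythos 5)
Your overall architecture matches the paper's---a left-to-right sweep along $s$ driven by geometric queries that locate the next event---and Part~1 is fine. The genuine gap is in the mechanism you propose for answering the queries in Part~2. You want to batch the $O(m_s)$ ray-shooting queries, and you correctly flag that they are generated adaptively, but the phase-based doubling scheme you offer does not resolve this: when an event $p_i$ is processed it creates up to two new active edges, and the queries attached to those edges must be answered before the \emph{next} event can even be identified, since their answers may precede every event currently in the queue. So a phase cannot ``advance until $2^j$ events have been processed'' while deferring its queries; each event forces its queries to be resolved immediately, and you are back to one constant-size batch per event, which destroys the batching advantage. Your second alternative is also not viable: the interior events on $s$ are landing points of extensions of visibility edges $uv$ where neither $u$ nor $v$ need be an endpoint of $s$, so they are not determined by $V_S(p_1)$, $V_S(p_{m'_s})$, and the visibility edges incident on $s$; identifying them is exactly the problem being solved.

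The paper avoids batching entirely. It builds, once, an \emph{online} ray-sweeping structure: the decision version (does a query triangle anchored on a segment of $S$, with two vertices on that segment and the third visible from them, meet any segment of $S$?) reduces to triangular range counting on the $2n$ segment endpoints, which for any $n\le k\le n^2$ gives $\Oe(k)$ space and $\Oe(n/\sqrt{k})$ query time, and Chan's randomized optimization technique lifts this decision structure to the optimization (ray-sweeping) query with the same bounds. Each of the $O(m_s)$ adaptively generated queries is then answered on demand as the sweep proceeds, for a total of $\Oe(k+m_s n/\sqrt{k})$; choosing $k=(m_s n)^{2/3}$ gives the claimed bound, and the only doubling needed is on a guess for $m_s$ (rebuilding the structure when the guess is exceeded), since $m_s$ is not known in advance. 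If you replace your batching step with such an online structure---a standard ray-shooting structure amid segments with the same space/query tradeoff would also serve---the rest of your argument goes through.
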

\begin{proof}
Part~1 of the lemma is clear. The algorithm for constructing $C_S(s)$
processes the edges of $\EVG(S)$ incident on $s$ in the order in which
they appear.  These $m_s$ edges can be easily extracted from $\EVG(S)$ in
the order in which they appear and processing each edge takes $O(1)$ time.

Part~2 of the lemma requires some use of a geometric range searching
structure for answering ray-sweeping queries.  Let $q$ and $q'$ be two
points that are visible, with $q'$ on some segment $s''\in S$.  A
\emph{ray-sweeping query} asks to determine the first endpoint of a segment
in $S$ that is intersected by $qq'$ as the point $q'$ moves towards the
left endpoint of $s''$.

A ray-sweeping query is an optimization problem.  It's corresponding
decision problem is a \emph{triangle interference query}, which asks to
determine if a query triangle $\Delta$ with vertices $q$, $q'$ and $q''\in
s''$ intersects any segment of $S$.  Because $q$ and $q'$ are visible and
$q'$ and $q''$ are both on $s''$, it is not hard to see that if $\Delta$
does intersect some segment in $S$, then $\Delta$ contains an endpoint of a
segment in $S$.  That is, a triangle interference query can be solved using
a triangular range searching structure built on the endpoints of segments
in $S$.

Triangular range searching is a well studied problem, and a number of
solutions exist that, for any $k$ with $n\le k\le n^2$, give $\Oe(k)$
space structures with $\Oe(n/\sqrt{k})$ query time \cite[Section~4]{ae99}.
Using one of these structures and applying Chan's randomized optimization
technique \cite[Theorem~3.2]{c99} yields a data structure for ray-sweeping
queries with the same preprocessing, space, and query time bounds.

To construct $C_S(s)$ we use essentially the same sweeping algorithm used
to define $C_S(s)$ except that ray-sweeping queries are used to compute
the algorithm's events on the fly.  The algorithm uses a priority queue
$Q$ to order and process these event points in left to right order.
To initialize the algorithm, we construct the visibility polygon
$V_S(p=p_1)$ in $O(n\log n)$ time using a radial sweep \cite{a85,so84}.
Next, each active edge of $V_S(p)$ is identified and processed.

Anytime (during initialization or later) that an active edge $e=qq'$
is created, the algorithm performs a ray-sweeping query with the
segment $qq'$ and a ray sweeping query with the edge $q'p$ (see
\figref{sweeping}).  The results of these two queries determine an event
point $p'\in s$ to the right of $p$ at which time the edge $e$ contributes
a triangle $\Delta_e$ to $s$. This event point $p'$ is enqueued in $Q$.
It is not hard to verify that this algorithm computes the same set of
triangles $C_S(s)$ as the original algorithm and that the number of
ray-sweeping queries performed is $O(m_s)$ (at most two queries are
performed for each triangle added to $C_S(s)$).

\begin{figure}
  \begin{center}
    \begin{tabular}{cccc}
    \includegraphics{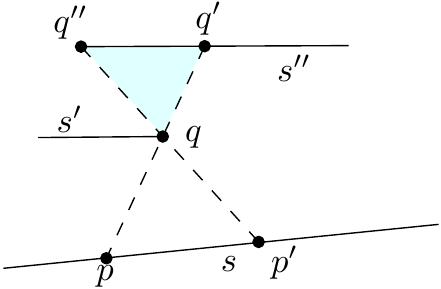} &
    \includegraphics{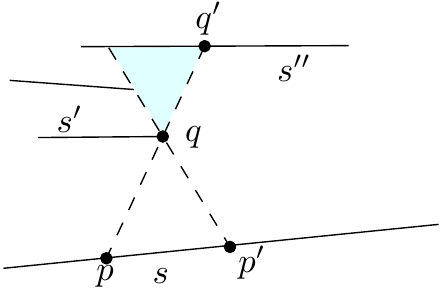} \\
    \includegraphics{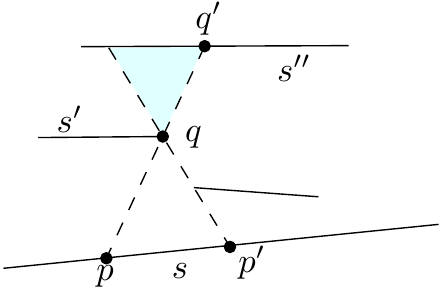} &
    \includegraphics{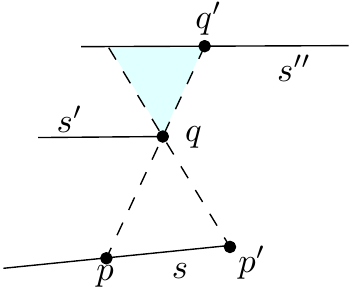} \\
    \end{tabular}
  \end{center}
  \caption{The four cases that can occur when using a ray-sweeping query
           to determine the point $p'$ at which active edge
           $e=qq'$ contributes a triangle to $C_S(s)$.}
  \figlabel{sweeping}
\end{figure}

Therefore, the algorithm can be implemented to run in $\Oe(k +
m_sn/\sqrt{k})$ for any $n\le k\le n^2$.  Given the value of $m_s$
in advance, setting $k=(m_s n)^{2/3}$ would yield the stated time
bound.  However, even without knowing $m_s$ in advance we can begin by
estimating the value of $m_s$ as $m'_s = 2$ and doubling our estimate
(and rebuilding the ray-sweeping structure) if we discover that $m_s >
m'_s$.  This doubling strategy yields the overall time bound of $\Oe(n+(m_s
n)^{2/3})$, as required.
\end{proof}

Next we show that, in a global sense, the number of triangles containing a
point $p\in\R^2$ gives a $2$-approximation to the number of segments of $S$
that are visible from $p$.

\begin{figure}[bth]
  \begin{center}
    \begin{tabular}{|c|c|c|}
      \includegraphics[width=3.8cm]{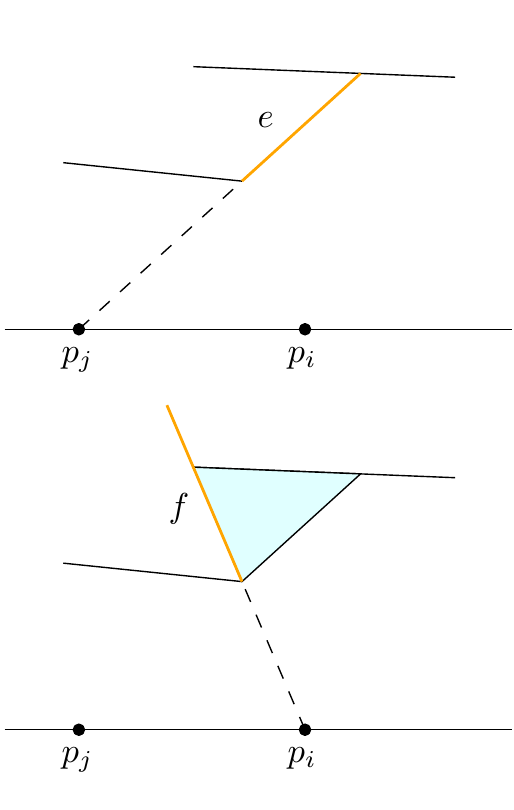} &
      \includegraphics[width=3.8cm]{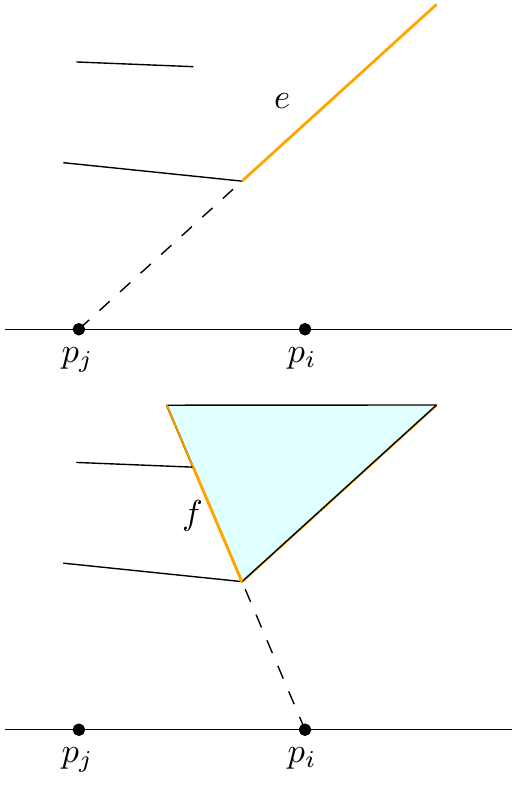} &
      \includegraphics[width=3.8cm]{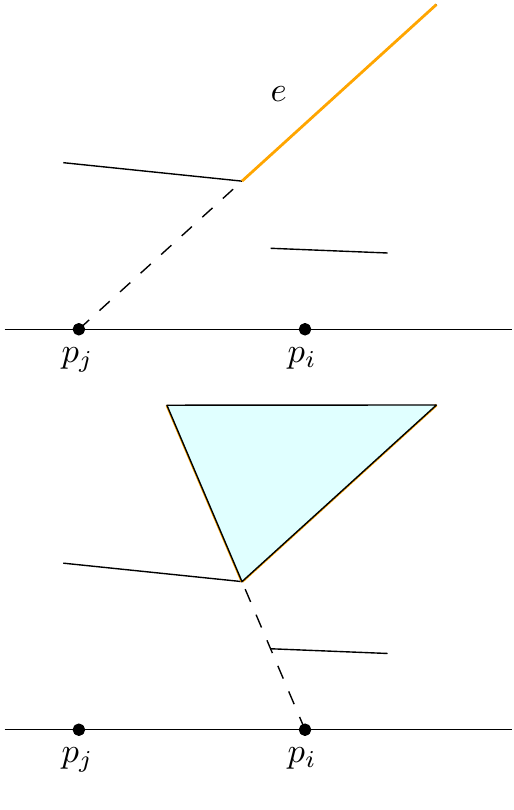} \\
      (a) & (b) & (c)
    \end{tabular}
  \end{center}
  \caption{The three cases that occur when processing an edge of $\EVG(S)$
incident on $p_i$.  Here, $\start(e)=p_j$.}
  \figlabel{cases}
\end{figure}

\begin{lem}\lemlabel{count}
Let $C(S)=\bigcup_{s\in S} C_S(s)$ and let $p$ be any point in $\R^2$ that
is not on the boundary of any triangle in $C(S)$.  If $m_p$ is the number
of segments in $S$ (partially) visible from $p$ and $m'_p$ is the number
of triangles in $C(S)$ that contain $p$, then $m_p \le m'_p \le 2m_p$.
\end{lem}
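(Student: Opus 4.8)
The plan is to prove the two inequalities separately; the lower bound is essentially immediate, while the upper bound reduces to a disjointness property of $C^+_S(s)$ and $C^-_S(s)$ considered individually. For the lower bound, observe that if a segment $s$ is visible from $p$ then $p\in V_S(s)$, and since $\cup C_S(s)=V_S(s)$ by \lemref{cover} (and its mirror-image counterpart for $C^-_S(s)$), at least one triangle of $C_S(s)$ contains $p$. The triangle sets $C_S(s)$ for distinct $s$ are recorded as distinct triangles of $C(S)$, so the $m_p$ visible segments contribute at least $m_p$ triangles of $C(S)$ that contain $p$; hence $m_p\le m'_p$.

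For the upper bound, note first that if $s$ is not visible from $p$ then $p\notin V_S(s)=\cup C_S(s)$, so $C_S(s)$ contributes nothing to $m'_p$. Thus it suffices to show that each visible $s$ contributes at most two triangles containing $p$, which I would obtain by showing at most one from $C^+_S(s)$ and at most one from $C^-_S(s)$. Two easy observations set this up: (i)~every triangle of $C^+_S(s)$ lies in the closed halfplane above the supporting line of $s$ and every triangle of $C^-_S(s)$ lies in the closed halfplane below it, because their vertices lie on the boundary of visibility polygons restricted to the respective halfplane; and (ii)~$p$ does not lie on the supporting line of $s$ whenever $s$ is visible from $p$, since otherwise the triangle of $C^+_S(s)$ containing $p$ (which exists, as then $p\in V^+_S(s)=\cup C^+_S(s)$) would have $p$ on its boundary, contradicting the hypothesis. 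By~(i) and~(ii), $p$ lies strictly on one side of the line of $s$ and can therefore lie in the interior of triangles from only one of the two families. So the upper bound follows once we prove: for each $s$, the triangles of $C^+_S(s)$ have pairwise disjoint interiors (and, symmetrically, those of $C^-_S(s)$).

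To prove this disjointness I would run a partition argument dual to the left-to-right sweep that defines $C^+_S(s)$. For a point $r$ in the interior of $V^+_S(s)$ that lies on no triangle boundary, assign to $r$ one triangle of $C^+_S(s)$ as follows: if $p_1$ sees $r$, assign the unique initial triangle (with apex $p_1$) containing $r$, which is well defined because the initial triangles tile $V^+_S(p_1)$; otherwise, let $p'(r)$ be the leftmost point of $s$ that sees $r$, so (as in the proof of \lemref{cover}) there is an endpoint $q$ of a segment of $S$ on the open segment $p'(r)r$, and $q$ is the lower endpoint of an active edge $e$ of $V^+_S(p'(r))$ with $\start(e)$ strictly left of $p'(r)$; assign to $r$ the triangle $\Delta_e$ contributed by this instance of $e$. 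This map is well defined: at any sweep time there is at most one active edge with a given lower endpoint (active edges are collinear with the viewpoint), and a fixed endpoint is the lower endpoint of an active edge only during a union of disjoint sweep intervals, so the instance active at the generic time $p'(r)$ is unique. Once one checks that every interior point of a triangle $\Delta\in C^+_S(s)$ is assigned to $\Delta$, no two triangles of $C^+_S(s)$ can share an interior point, and we are done.

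The step I expect to be the main obstacle is precisely that last check: verifying that each $\Delta_e$ is \emph{exactly} the set of points first uncovered by $e$ over the interval $(\start(e),p_i)$ that closes it --- that is, that the two lines through $q$ (to $\start(e)$ and to $p_i$) together with the far blocking segment bound exactly that set, neither more nor less. The delicate sub-case is when one endpoint $q$ is the lower endpoint of active edges over several disjoint sub-intervals of the sweep and hence contributes several triangles with apex $q$; here one uses that the direction $\overrightarrow{qp}$ rotates monotonically as $p$ moves along $s$ (since $q\notin s$), so those sub-intervals occupy disjoint angular wedges at $q$ and their triangles are interior-disjoint. One must also dispatch the interaction with the initial triangles (a point first uncovered strictly after $p_1$ is not in $V^+_S(p_1)$, hence in no initial triangle) and, with care, the triangles near the two endpoints of $s$ --- the exact place where this construction deviates from O'Rourke and Suri's, and where the improvement over their $3$-approximation comes from.
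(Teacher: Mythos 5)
Your lower bound is fine and matches the paper's. The upper bound, however, rests on a claim that is false: the triangles of $C^+_S(s)$ do \emph{not} have pairwise disjoint interiors, so a single visible segment can contribute many more than two triangles of its own covering that contain $p$. Concretely, let $s$ run from $(0,0)$ to $(10,0)$, let $t$ be a short blocker from $(4,1)$ to $(5,1)$, and let $p=(4.5,3)$. Then $p$ sees the viewpoints $(x,0)\in s$ with $x\in[0,3.75)$ and with $x\in(5.25,10]$, but not those in between. Hence $p$ lies in one of the initial fan triangles of $V^+_S(p_1)$ (it is visible from the left endpoint of $s$), \emph{and} at $x=5.25$ it is swept over by the active edge whose lower endpoint is $q=(5,1)$, so it also lies in the triangle $\Delta_e$ that this active edge contributes. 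That is two interior-overlapping triangles of $C^+_S(s)$ containing $p$; with $k$ suitably placed blockers one gets $k+1$ of them. The step you yourself flagged as the main obstacle is exactly where this breaks: $\Delta_e$ is by construction the \emph{entire} triangle bounded by the two extreme rays through $q$ (toward $\start(e)$ and toward $p_i$) and the far blocking segment, i.e.\ everything swept by the active edge, not merely the points \emph{first} uncovered by it; and clipping it to the first-uncovered set would destroy triangularity (and the covering's whole purpose). So no per-segment disjointness argument can yield the factor $2$.

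The paper's proof is necessarily global: it builds a $2$-to-one map $f$ from the triangles of $C(S)$ containing $p$ to \emph{visible segments}, sending an initial fan triangle of $C_S(s)$ to $s$ itself and sending a sweep triangle $\Delta_e\in C_S(s')$ to the segment owning the apex $q$ of $\Delta_e$ --- a segment that $p$ does see, since $p$ lies on the active edge emanating from $q$ at the moment it is swept. It then shows each segment is charged at most once at its left endpoint (the fan triangles of $V^+_S(p_1)$ tile, so at most one contains $p$) and at most once at its right endpoint (if two sweep triangles with apex $q$ both contained $p$, the line through $p$ and $q$ would meet both generating segments beyond $q$, and the nearer intersection would block the farther one from seeing $q$). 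In the example above this redistribution is visible: the second triangle containing $p$ is charged not to $s$ but to the blocker $t$, which $p$ also sees. To repair your argument you would need some such reassignment of triangles to segments other than the one whose covering they belong to; bounding each $C_S(s)$'s own contribution by two cannot work.
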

\begin{proof}
Let $C_p\subseteq C(S)$ be the set of triangles in $C(S)$ that contain
$p$, and let $S_p\subseteq S$ be the set of segments in $S$ that are
(partially) visible from $p$.  Our goal is to show that $|S_p|\le |C_p|\le
2|S_p|$. The lower bound on $m'_p=|C_p|$ is trivial: For every segment
$s\in S_p$, $V_S(s)$ contains $p$, so, by \lemref{cover}, $C_S(s)$
contributes at least one triangle to $C_p$.

To prove the upper bound, we describe a mapping $f: C_p \rightarrow S_p$
that is \emph{$2$-to-one}; for every $s\in S_p$, there exists at most two
triangles $\Delta\in C_p$ such that $f(\Delta)=s$. The existence of $f$
then proves the upper bound.

Let $\Delta\in C_p$ be some triangle that contains $p$ and suppose that
$\Delta\in C_S(s)$ for some $s\in S$ that is, without loss of generality,
below $p$.  If $\Delta$ is incident on $s$ (\figref{counting}.a), then $\Delta$ was added to
$C_S(s)$ as part of $V_S(p)$ where $p$ was the left endpoint of $s$.  In
this case, we set $f(\Delta)=s$.  Otherwise, $\Delta$ was created when
sweeping $s$ with $p$ and some active edge $e$ of $V_S(p)$ generated
$\Delta$ (\figref{counting}.b).  The vertex $q$ of $\Delta$ that is closest to $s$ is incident on a
segment $s'\in S$.  In this case $f(\Delta)=s'$.

\begin{figure}
  \begin{center}
    \begin{tabular}{cc}
      \includegraphics{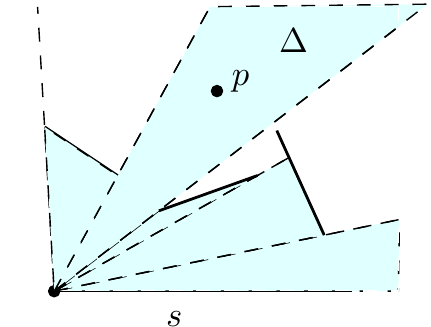} &
      \includegraphics{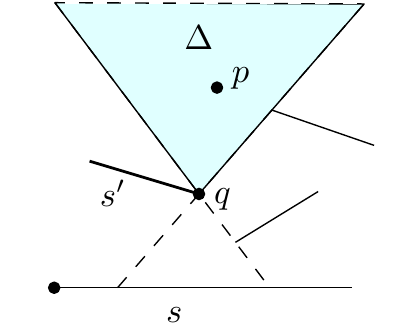} \\
      (a) $f(\Delta)=s$ & (b) $f(\Delta)=s'$
    \end{tabular}
  \end{center}
  \caption{The mapping $f$ takes $\Delta$ onto (a)~$s$ and (b)~$s'$.}
  \figlabel{counting}
\end{figure}

We now argue that $f$ is 2-to-one.  Let $s\in S$ be some segment and
suppose, without loss of generality, that $p$ is above $s$.  Consider a
triangle $\Delta \in f^{-1}(s)$ and observe that, by the definition of $f$,
$\Delta$ has a vertex that is an endpoint of $s$.

Note that there is at most one triangle in $C_S(s)\cap C_p$ that maps
to $s$, and this triangle exists precisely if $p$ is visible from the
left endpoint of $s$.  All that remains to show is that there is at
most one additional segment $s'\in S$, $s'\neq s$ such that $C_S(s')$
contains a triangle $\Delta$ with $f(\Delta)=s$.

Let $\Delta$ be such a triangle and suppose that $\Delta$ is incident to
the endpoint $q$ of $s$.  Refer to \figref{count-right}.  The triangle
$\Delta$ was generated by an active edge when processing $s'$.  In
particular, there is a subsegment $p_j p_i\subseteq s'$ such that an active
edge $e$ of $V^+_S(p)$ sweeps over $\Delta$ when $p$ travels from $p_j$ to
$p_i$. (Note, $p_j=\start(e)$.) This implies that $p_i$ and $p_j$ are below
$s$. Since $p$ travels from left to right along $e$, this implies that $q$
is the right endpoint of $s$ because, otherwise, $e$ would not be an active
edge of $V^+_S(p)$.

\begin{figure}
  \begin{center}
    \includegraphics{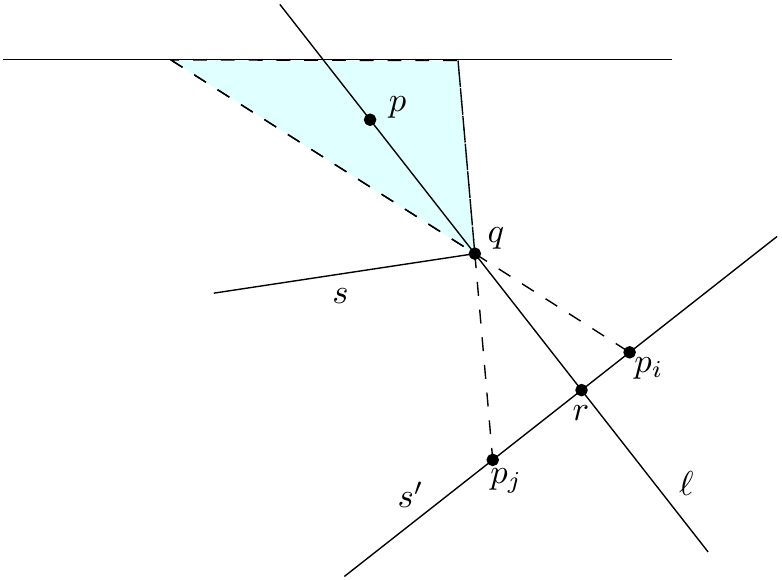}
  \end{center}
  \caption{At most one triangle in $f^{-1}(s)$ is incident to the right
           endpoint of $S$.}
  \figlabel{count-right}
\end{figure}

Thus far, we have established that at most one triangle in $f^{-1}(s)$ is
incident to the left endpoint of $s$.  To see that at most one triangle
($\Delta$, discussed above) is incident to the right endpoint of $s$,
suppose by way of contradiction that there are two such triangles $\Delta$
and $\Delta'$ with $\Delta\in C_S(s')$ and $\Delta'\in C_S(s'')$.
Consider the line $\ell$ through $p$ and $q$.  Observe that $\ell$
intersects both $s$ and $s'$, in two points $r$ and $r'$, respectively. But
this is not possible since then one of $r$ or $r'$ does not see the
endpoint $q$.
\end{proof}

\paragraph{Remark:}
The condition, in \lemref{count}, that $p$ is not on the boundary of
any triangle in $C_S(s)$ is unnecessary if we take a little extra care.
In particular, the mapping $f$ actually maps triangles to the endpoints of
segments.  The set $C_S(v)$ of triangles mapped to a particular endpoint
$v$ all have $v$ as a vertex and no two triangles in $C_S(v)$ share an
interior point.  This means that we can define each triangle in $C_S(v)$
to either include or exclude some of its edges or vertices so that the
triangles are disjoint but their union remains unchanged. This yields
a set of (partially open) triangles $C'(S)$ for which \lemref{count}
holds for any point $p\in\R^2$.

\section{Applications}
\seclabel{applications}

In this section, we consider applications of \lemref{cover} and
\lemref{count} to some visibility testing and counting problems. These
applications rely on data structures for \emph{triangle inclusion
counting}:  Given a set $T$ of triangles, we want to preprocess $T$
into a data structure for counting the number of triangles in $T$ that
contain a query point $p$.

The tools needed to perform these queries
are well-known, but finding the relevant structures and techniques,
and applying them correctly, can take some time.  Therefore, we review
the data structure here and point out the relevant references.

Let $\Delta$ be a triangle. Then $\Delta$ is the intersection of at most
4 halfplanes bounded by four lines $h(\Delta)=(u_1,u_2,d_1,d_2)$ where $u_1$
and $u_2$ bound $\Delta$ from below and $d_1$ and $d_2$ bound $\Delta$
from above. Given a  triangle $\Delta$ we have either $u_1=u_2$ or $d_1=d_2$.
By the standard duality mapping \cite[Section~8.2]{bcko08} the four lines
in $h(\Delta)$ map to four points $u_1^*$, $u_2^*$, $d_1^*$ and $d_2^*$.
A point $p\in\R^2$ maps to a line $p^*$ in the dual plane.  The point
$p$ is contained in $\Delta$ if and only if the line $p^*$ is above
(or on) $u_1^*$ and $u_2^*$ and below (or on) $d_1^*$ and $d_2^*$.
Let $h^*(\Delta)=(u_1^*,u_2^*,d_1^*,d_2^*)$.

A triangle inclusion counting structure for $T$ stores the 8-dimensional
point-set
\[
    h^*(T) = \{ h^*(\Delta) : \Delta\in T \} \enspace .
\]
Given a query point $p$, we want to count the number of points
$(a,b,c,d)\in h^*(T)$ that satisfy the four requirements:
\begin{enumerate}
  \item $a$ is above $p^*$, and
  \item $b$ is above $p^*$, and
  \item $c$ is below $p^*$, and
  \item $d$ is below $p^*$.
\end{enumerate}
Counting the number of points in $h^*(T)$ that satisfy any one of these
requirements is a halfplane range counting problem.  Data structures
for halfplane range counting are plentiful, and there are several
data structures known that use $\Oe(k)$ space and have query time
$\Oe(n/\sqrt{k})$ \cite[Section~4]{ae99}.  Several of these structures
(for example, Matou\v{s}ek's efficient partition trees \cite{m92}) are
hierarchical structures that are efficient and $r$-convergent (see Agarwal
and Erickson \cite[Section~5]{ae99} for definitions of hierarchical,
efficient, and $r$-convergent).  This implies \cite[Theorem~10]{ae99}
that there exists a 4-layer structure that uses $\Oe(k)$ space and
preprocessing time and that, in time $\Oe(n/\sqrt{k})$, can count the
number of elements in $h^*(T)$ that satisfy the constraints 1--4 for any
query point $p^*$.  Translating this back into primal space we obtain
the data structure we need:

\begin{thm}[\cite{ae99,m92}]\thmlabel{triangle-inclusion}
Let $T$ be a set of $n$ triangles. For any $k$ with $n\le k\le n^2$, there
exists a data structure of size $\Oe(k)$ that can be constructed in time
$\Oe(k)$ and that can count the number of triangles containing a query
point $p$ in $\Oe(n/\sqrt{k})$ time.
\end{thm}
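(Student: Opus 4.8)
The plan is to reduce the query to a four-level halfplane range counting problem in the dual plane and then invoke the standard machinery for composing partition trees, exactly as set up in the discussion preceding the statement. First I would make the reduction precise. For each triangle $\Delta\in T$ with bounding lines $h(\Delta)=(u_1,u_2,d_1,d_2)$, dualize to the four points $h^*(\Delta)=(u_1^*,u_2^*,d_1^*,d_2^*)\in(\R^2)^4$, handling the degenerate case $u_1=u_2$ (or $d_1=d_2$) by simply repeating the coincident point so that every triangle contributes exactly one element of the $8$-dimensional set $h^*(T)$. A point $p$ lies in $\Delta$ iff the dual line $p^*$ passes above (or on) $u_1^*$ and $u_2^*$ and below (or on) $d_1^*$ and $d_2^*$; hence counting triangles of $T$ containing $p$ is precisely counting the tuples of $h^*(T)$ that simultaneously satisfy these four halfplane conditions, and each individual condition is a halfplane range counting query on a point set of size $n$.

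Next I would assemble the four-level structure. At the outermost level, build a halfplane range counting structure --- Matou\v{s}ek's efficient partition tree \cite{m92} --- on the points $\{u_1^* : \Delta\in T\}$; at each node of that tree, recurse on the $\{u_2^*\}$ of the triangles stored there, then on $\{d_1^*\}$, and finally on $\{d_2^*\}$, storing at the leaves of the innermost level the number of triangles represented. Since the chosen base structure is hierarchical, efficient, and $r$-convergent in the sense of Agarwal and Erickson \cite[Section~5]{ae99}, their composition theorem \cite[Theorem~10]{ae99} guarantees that the resulting $4$-layer structure uses $\Oe(k)$ space and preprocessing time and, given the query line $p^*$, produces a canonical decomposition of the set of satisfying tuples into $\Oe(n/\sqrt{k})$ precounted subsets in $\Oe(n/\sqrt{k})$ time. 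The four $n^{O(\epsilon)}$ overhead factors, one incurred per layer, are absorbed into the $\Oe(\cdot)$ notation after adjusting the partition-tree branching parameter, so the exponents of $k$ and of $n/\sqrt{k}$ are unchanged.

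Finally I would translate back to primal space: summing the stored counts over the $\Oe(n/\sqrt{k})$ canonical nodes returned by a query gives the number of triangles of $T$ containing $p$, in $\Oe(n/\sqrt{k})$ total time, and the admissible parameter range $n\le k\le n^2$ is inherited verbatim from the single-level halfplane structure. This gives the stated space, preprocessing, and query bounds.

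I do not expect a genuine obstacle here, since the statement is essentially an assembly of cited black boxes. The one point that warrants care --- and where I would concentrate whatever effort the proof requires --- is verifying that the hypotheses of \cite[Theorem~10]{ae99} (hierarchical, efficient, $r$-convergent) actually hold for the base structure used, and that composing four copies degrades neither the space exponent nor the query exponent beyond an $n^{\epsilon}$ factor; this is a matter of checking that the cited machinery applies rather than reproving it.
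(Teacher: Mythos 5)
Your proposal follows the paper's argument essentially verbatim: the same dualization of each triangle to a $4$-tuple of points $h^*(\Delta)$, reduction of containment to a conjunction of four halfplane range counting conditions, and composition of four layers of an efficient, hierarchical, $r$-convergent halfplane structure via \cite[Theorem~10]{ae99}. The approach and the bounds match the paper's proof exactly.
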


\subsection{Visibility Testing} \label{sec:VizTesting}

Our first application follows immediately by storing the triangles of
\lemref{cover} in the data structure of \thmref{triangle-inclusion}.  This
yields our first result:

\begin{thm}\thmlabel{containment}
Let $S$ be a set of $n$ disjoint line segments and let $s\in S$
be a special segment.  For any $m_s\le k\le m_s^2$, there exists a
data structure of size $\Oe(k)$ that can test, in $\Oe(m_s/\sqrt{k})$ time,
if any query point $p$ is contained in $V_S(s)$.  The data structure can be constructed in
\begin{enumerate}
 \item $\Oe(k)$ time if we are given $\EVG(S)$ or
 \item  $\Oe(n + (m_sn)^{2/3} + k)$ time otherwise.
\end{enumerate}
\end{thm}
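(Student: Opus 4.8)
The statement is a direct packaging of two earlier results, so the proof is a short assembly argument rather than anything with real combinatorial content. First I would apply \lemref{cover} to obtain the covering $C_S(s)$, a set of at most $m_s$ triangles whose union is $V_S(s)$; this reduces the membership question ``$p\in V_S(s)$?'' to ``does $p$ lie in at least one triangle of $C_S(s)$?'', which is exactly a triangle inclusion counting query (a nonzero count certifies membership). Then I would feed these $m_s$ triangles into the data structure of \thmref{triangle-inclusion}, instantiated with $n$ replaced by $m_s$: for any $m_s\le k\le m_s^2$ this yields a structure of size $\Oe(k)$, built in $\Oe(k)$ time, answering the count query — and hence the membership query — in $\Oe(m_s/\sqrt{k})$ time. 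That gives the size and query-time claims immediately.

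For the construction time, there are two cases matching \lemref{cover-algorithm}. If $\EVG(S)$ is supplied, \lemref{cover-algorithm}(1) computes $C_S(s)$ in $O(m_s)$ time, which is dominated by the $\Oe(k)$ preprocessing of \thmref{triangle-inclusion} (since $k\ge m_s$), giving total $\Oe(k)$. Otherwise \lemref{cover-algorithm}(2) computes $C_S(s)$ in $\Oe(n+(m_s n)^{2/3})$ time, and adding the $\Oe(k)$ preprocessing of the inclusion structure gives $\Oe(n+(m_s n)^{2/3}+k)$, exactly as stated. I would also note the tie-breaking subtlety: if one insists on deciding membership for points on triangle boundaries, invoke the Remark after \lemref{count} (replacing $C_S(s)$ by the partially-open variant $C'(S)$ restricted to $s$), so that $\cup C_S(s)=V_S(s)$ holds pointwise; but since the theorem only asks to ``test if $p$ is contained in $V_S(s)$'' and the earlier lemmas already establish $\cup C^+_S(s)=V^+_S(s)$ (and symmetrically for the lower part), this is a one-line remark, not a real step.

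The only thing requiring a sentence of justification is that a \emph{counting} structure suffices for a \emph{testing} query — but this is immediate, since $p\in V_S(s)=\cup C_S(s)$ if and only if the number of triangles of $C_S(s)$ containing $p$ is at least one. I expect no genuine obstacle here; the ``hard part,'' such as it is, is merely bookkeeping the two construction-time cases and checking that in Case~1 the $O(m_s)$ covering cost is absorbed by $\Oe(k)\ge\Oe(m_s)$, and in Case~2 that the three terms $\Oe(n)$, $\Oe((m_sn)^{2/3})$, and $\Oe(k)$ are simply summed. All constants hidden in $\Oe(\cdot)$ remain constant because $\epsilon$ is fixed and we compose only a constant number of structures.
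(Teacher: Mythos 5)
Your proposal is correct and follows exactly the paper's route: the paper proves \thmref{containment} by the one-line observation that one stores the $O(m_s)$ triangles of \lemref{cover} in the structure of \thmref{triangle-inclusion}, with the two construction-time cases coming from \lemref{cover-algorithm}. Your additional remarks on reducing testing to nonzero counting and on boundary tie-breaking are harmless elaborations of the same argument.
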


Next we argue that, barring a breakthrough on Hopcroft's Problem
\cite{e96}, \thmref{containment} is near-optimal.  \emph{Hopcroft's
Problem} takes as input a set $L$ of $n$ lines and a set $P$ of $n$ points
and asks if any point in $P$ is contained in any line in $L$. Currently,
the most efficient methods of solving Hopcroft's Problem have running
times in $\Omega(n^{4/3})$.  Furthermore, $\Omega(n^{4/3})$ is a lower
bound for Hopcroft's Problem in a restricted model of computation that
can model all known algorithms for the problem \cite{e96}.

Given the set $L$, we can compute the leftmost intersection point between
any pair of lines by sorting the lines by slope and checking the
intersection points between consecutive pairs of lines.  Assume, without
loss of generality, that this leftmost intersection point has
$x$-coordinate equal to~0. Using infinitesimal gaps between
segments,\footnote{The use of infinitesimals in lower bounds is justified
by Erickson's results \cite{e99b}.} we can easily construct a set of $3n+O(1)$
segments $s_0,\ldots,s_{3n+O(1)}$ such that a query point $p$ whose
$x$-coordinate is greater than 0 is visible from $s_0$ if and only if $p$
lies on one of the lines in $L$ (see \figref{hopcroft}). For a query point
$p$ with $x$-coordinate smaller than 0 we can test if $p$ is contained in
any line of $L$ in $O(\log n)$ time by storing the lines of $L$ sorted by
slope and using binary search.

\begin{figure}
  \begin{center}
    \begin{tabular}{cc}
      \includegraphics[width=5cm]{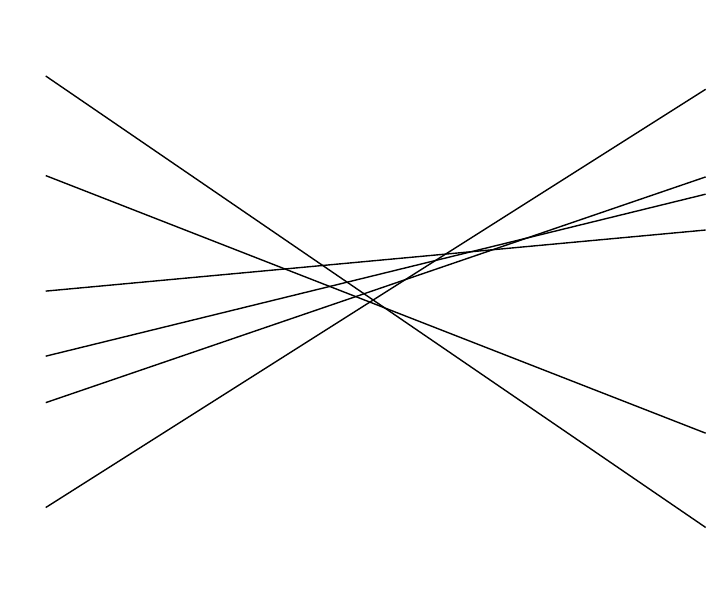} &
      \includegraphics[width=5cm]{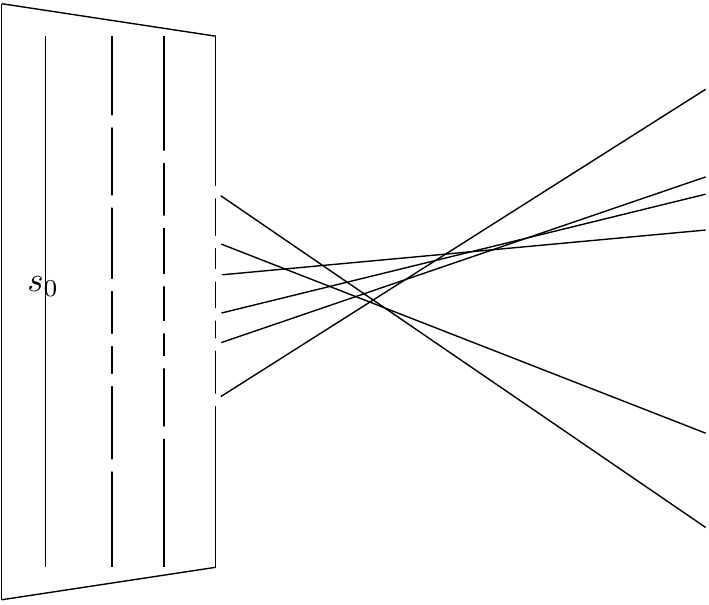} \\
      (a) & (b)
    \end{tabular}
  \end{center}
  \caption{A set $L$ of lines (a) and a set of $3n+O(1)$ segments where
           testing if a point is in $V_S(s_0)$ helps to determine if
           the point is contained in any line of $L$.}
  \figlabel{hopcroft}
\end{figure}

Therefore, by the above discussion, setting $k=n^{4/3}$ and using
\thmref{containment} we can use this data structure to solve Hopcroft's
Problem in $\Oe(n^{4/3})$ time.  Furthermore, the existence of a data
structure for testing if $V_S(s)$ contains a query point $p$ that could be
constructed in $o(n^{4/3})$ time and whose query time is $o(n^{1/3})$ would
give a $o(n^{4/3})$ time algorithm for Hopcroft's Problem.

\subsection{Visibility Counting -- Relative Approximation} \label{sec:VizCountingRel}

Next we consider Fischer \etal's problem of approximate visibility
counting \cite{fhjmz08,fhjmz09}.  We want to preprocess the segments
in $S$, so that for any query point $p$ we can quickly approximate the
number of segments in $S$ that is visible from $p$.

We begin with an easy corollary obtained by computing $C(S)$ using
\lemref{cover-algorithm} and putting all the triangles of $C(S)$ into the
data structure of \thmref{triangle-inclusion}.  The resulting structure
guarantees a relative approximation of the visibility count for all values
of $m_p$:

\begin{cor}\corlabel{relative}
  Let $S$ be a set of $n$ disjoint line segments whose visibility graph
  has $m$ edges, and let $0 < \alpha < 1$ be real valued
  parameters.  There exists a data structure $D$ that can approximate
  the number of segments of $S$ visible from any query point $p$ such that:
  \begin{enumerate}
   \item $D$ has size $\Oe(m^{1+\alpha}) = \Oe(n^{2(1+\alpha)})$,
   \item $D$ can be constructed in time $\Oe(m^{1+\alpha}) = \Oe(n^{2(1+\alpha)})$,
   \item $D$ can perform a query in $\Oe(m^{(1/2)(1-\alpha)}) =
          \Oe(n^{1-\alpha})$ time, and
   \item when querying $D$ with a point $p$ that sees $m_p$ points of
          $S$, $D$ returns a value $m'_p$ that satisfies $m_p \le m'_p
          \le 2m_p$.
  \end{enumerate}
\end{cor}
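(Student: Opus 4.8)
The plan is to assemble \corref{relative} directly from the three ingredients already established: \lemref{cover-algorithm} (for constructing $C(S)$), \lemref{count} (for the $2$-approximation guarantee), and \thmref{triangle-inclusion} (for the triangle-inclusion counting structure). First I would set $k = m^{1+\alpha}$ and observe that since $m \le n^2$ this choice indeed satisfies $m \le k \le m^2$, which is the range required by \thmref{triangle-inclusion} once we instantiate that theorem on the triangle set $T = C(S)$. The key quantitative input is $|C(S)| = O(m)$: by \lemref{cover} we have $|C_S(s)| \le m_s$ for every $s$, so $|C(S)| = \sum_{s\in S} |C_S(s)| \le \sum_{s\in S} m_s = O(m)$, since each edge of $\EVG(S)$ contributes to $m_s$ for at most two segments $s$ (its two endpoints' segments), and $\EVG(S)$ has $O(m)$ edges. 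Thus $T$ has $O(m)$ elements, and plugging $|T| = O(m)$ and $k = m^{1+\alpha}$ into \thmref{triangle-inclusion} gives size and construction time $\Oe(k) = \Oe(m^{1+\alpha})$ and query time $\Oe(|T|/\sqrt{k}) = \Oe(m / m^{(1+\alpha)/2}) = \Oe(m^{(1-\alpha)/2})$, establishing parts~1, 2, and~3 (using $m = \Theta(n^2)$ in the worst case to get the second expression in each bound).

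Next I would handle part~4, the approximation quality. Given a query point $p$, the data structure returns $m'_p$, the number of triangles of $C(S)$ containing $p$. By \lemref{count}, $m_p \le m'_p \le 2m_p$, which is exactly the claimed guarantee. The one wrinkle is the hypothesis in \lemref{count} that $p$ not lie on the boundary of any triangle of $C(S)$; here I would invoke the Remark following \lemref{count}, which explains that by using the partially-open triangles $C'(S)$ in place of $C(S)$ — assigning edges and vertices of the triangles around each endpoint $v$ so that the triangles in $C_S(v)$ become pairwise disjoint without changing their union — \lemref{count} holds for \emph{every} point $p \in \R^2$. So the data structure is built on $h^*(C'(S))$ rather than $h^*(C(S))$; this changes nothing about the size or time bounds since $|C'(S)| = |C(S)|$ and a partially-open triangle is still an intersection of at most four halfplanes (some open, some closed), which the dual halfplane range-counting machinery of \thmref{triangle-inclusion} handles with only a cosmetic change (strict versus non-strict inequalities in constraints 1--4).

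The remaining point is the construction time for $C(S)$ itself, which must be subsumed by the $\Oe(m^{1+\alpha})$ bound in part~2. I would compute $\EVG(S)$ once, in $O(n\log n + m)$ time, using the Ghosh--Mount algorithm as described in the preliminaries, and then apply Part~1 of \lemref{cover-algorithm} to each $s \in S$, for a total of $\sum_{s\in S} O(m_s) = O(m)$ time. Since $m = O(m^{1+\alpha})$, this is absorbed, and then building the triangle-inclusion structure on the resulting $O(m)$ triangles dominates at $\Oe(m^{1+\alpha})$.

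I do not anticipate a genuine obstacle — \corref{relative} is essentially a packaging corollary. The one place that requires a moment's care is making sure the boundary-case issue is dispatched cleanly via the Remark (so the guarantee in part~4 is unconditional in $p$) and that the parameter range $m \le k \le m^2$ lines up with the substitution $k = m^{1+\alpha}$ for $0 < \alpha < 1$; both are routine. If anything is ``hard'' it is purely bookkeeping: tracking the two equivalent forms of each bound (in terms of $m$ and in terms of $n$), which follows from $m = O(n^2)$ and, in the worst case, $m = \Theta(n^2)$.
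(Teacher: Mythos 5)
Your proposal is correct and follows exactly the route the paper takes: compute $C(S)$ (via $\EVG(S)$ and \lemref{cover-algorithm}), store its $O(m)$ triangles in the structure of \thmref{triangle-inclusion} with $k=m^{1+\alpha}$, and invoke \lemref{count} (with the Remark handling boundary points) for the $2$-approximation guarantee. The paper states this as a one-sentence packaging corollary; your write-up just supplies the same bookkeeping in more detail.
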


\subsection{Visibility Counting -- Absolute Approximation} 
\seclabel{VizCountingAbs}

Although \corref{relative} offers a good approximation guarantee, the
space requirement is too large.  In the worst case, when $m=\Omega(n^2)$,
a data structure of size $\omega(n^2)$ is required in order to achieve
a sublinear query time.

Fischer \etal\ \cite{fhjmz08,fhjmz09} argue that, for the computer
graphics application they consider, an absolute approximation is
sufficient.  In their application, there is a function $f(n)$ such that,
for $m_p\ll f(n)$ it is more efficient to run a visibility culling
algorithm before rendering the view from $p$ but for $f(n) \ll m_p$
it is preferable to simply send all elements of $S$ to the graphics
hardware for rendering.  For $m_p\approx f(n)$ neither strategy has a
clear advantage.  If we define $a \ll b$ as $a < b - \delta n$ then we
see that an algorithm that can approximate $m_p$ with an additive error
of at most $\delta n$ is sufficient for this application.

We present two different data structures that offer this kind of
approximation guarantee.  These two structures offer different tradeoffs
in terms of accuracy and query time.

\subsubsection{Solution 1: Sampling from $S$}
\seclabel{VizCountingAbs1}

The data structure of \thmref{containment} combined with a careful
random sampling of the elements of $S$ provides our first solution.
We create a Bernoulli sample $S''\subseteq S$ by choosing each element
of $S$ independently with probability $(c\log n)/n$, where $c \ge 1$
is a parameter of the data structure that controls the accuracy of
the approximation.  For each sample $s\in S''$, we construct the data
structure of \thmref{containment} with the value $k=m_s^{1+\alpha}$
for some parameter $0\le\alpha \le 1$ that controls the space/query-time
tradeoff.  If, during the construction of this data structure, it turns
out that $m_s > 4cn$, then discard $s$ from $S''$.  Notice that this
algorithm is effectively drawing a Bernoulli sample from the set
$S' = \{ s\in S : m_s \le 4cn \}$
and that, since $2m=\sum_{s\in S} m_s \le 4n^2$, there are at most $n/c$
elements in $S$ that are not in $S'$.
Suppose $p$ is visible from $m_p$ elements of $S$ and $m'_p$ elements of
$S'$.  Then, by the above discussion, we have  $m_p - n/c \le m'_p \le m_p$.
Let $m''_p= (n/(c\log n))\cdot |\{s\in S'': p\in V_S(s)\}|$.
The quantity $m''_p$ is an unbiased estimator of $m'_p$ and, using
Chernoff's bounds (see \appref{absolute-a}), we readily establish that
\[
   \Pr\{|m''_p - m'_p| \ge \delta n\} \le n^{-\Omega(\delta^2cn/m'_p)}
           \le n^{-\Omega(\delta^2cn/m_p)}
\]
for any $\delta > 0$. Combining this with the previous equation gives
\[
   \Pr\{|m''_p - m_p| \ge \delta n + n/c\} \le n^{-\Omega(\delta^2 cn/m_p)}.
\]
This establishes the accuracy of the data structure.  What remains is to
analyze the query time, space, and construction time.\smallbreak

\noindent\textbf{Query time.} A query computes $m''_s$ by performing
a query in each of the data structures built on the elements of $S''$.
The expected contribution of an element $s\in S'$ to the query time is
therefore
\[
    \left(\frac{c\log n}{n}\right)\cdot\Oe\left(m_s^{(1/2)(1-\alpha)}\right)
\]
since it contributes $\Oe(m_s/\sqrt{k})=\Oe(m_s^{(1/2)(1-\alpha)})$ to the
query time if it is chosen to take part in $S''$ and it contributes
nothing otherwise.  Summing this over all $s$, we get a total expected query
time of at most
\[
     \left(\frac{c\log n}{n}\right)
         \times\sum_{s\in S'}\Oe\left(m_s^{(1/2)(1-\alpha)}\right)
     = \Oe(c(m/n)^{(1/2)(1-\alpha)})
\]
where the last step follows from the fact that $f(x)=x^{(1/2)(1-\alpha)}$
is a concave function and that $\sum_{s\in S'} m_s= O(m)$.\smallbreak

\noindent\textbf{Space.}
Arguing as above, the expected amount of space that an element $s\in S'$
contributes to this data structure is
\[
    \left(\frac{c\log n}{n}\right)\cdot \Oe\left(m_s^{1+\alpha}\right).
\]
Therefore, the total expected amount of space used by the structure is
\begin{equation}
    \left(\frac{c\log n}{n}\right)\cdot \sum_{s\in S'}\Oe\left(m_s^{1+\alpha}\right)
     = \Oe((cm/n)(cn)^\alpha) \eqlabel{space-convex}
\end{equation}
where the last step follows by maximizing the sum $\sum_{s\in
S'}m_s^{1+\alpha}$ using the facts that $\sum_{s\in S'} m_s = O(m)$ and that
any individual $s\in S'$ has $m_s \le 4cn$. \smallbreak

\noindent\textbf{Preprocessing time.}
The preprocessing phase requires computing $C_S(s)$ for each sample element
$s\in S''$ and constructing a layered partition tree for the elements of
$C_S(s)$.  Constructing the partition tree takes $\Oe(m_s^{1+\alpha})$
time, so, as above, the total expected cost of constructing the partition
trees for all elements in $S''$ is $\Oe((cm/n)(cn)^\alpha)$.

Computing $C_S(s)$, using \lemref{cover-algorithm} takes
$\Oe((m_sn)^{2/3})$ time. Since $f(x) = (xn)^{2/3}$ is a concave function,
the total expected time to compute $C_S(s)$ for each $s\in S''$ is
$\Oe(cm^{2/3})$. 

\begin{thm}\thmlabel{absolute-a}
  Let $S$ be a set of $n$ disjoint line segments whose visibility
  graph has $m$ edges and let $c>1$ and $0 < \alpha < 1$ be real valued
  parameters.  There exists a data structure $D$ that can approximate
  the number of segments of $S$ visible from any query point $p$ such that:
  \begin{enumerate}
   \item $D$ has expected size $\Oe((cm/n)(cn)^{\alpha}) =
          \Oe((cn)^{1+\alpha})$,
    \item $D$ can be constructed in
          $\Oe(cm^{2/3} + (cm/n)(cn)^{\alpha}) = \Oe(cn^{4/3} + (cn)^{1+\alpha})$
          expected time,
    \item $D$ can perform a query in $\Oe(c(m/n)^{(1/2)(1-\alpha)}) =
          \Oe(cn^{(1/2)(1-\alpha)})$ expected time, and
    \item for any $\delta > 0$, when querying $D$ with a query for a
          point $p$ that sees $m_p$ points of $S$, $D$
          returns a value
          $m''_p$ that satisfies $m_p - n/c - \delta n \le m''_p \le m_p
          + \delta n$ with probability at
          least $1-n^{-\Omega(\delta^2
          cn/m_p)}$.
   \end{enumerate}
\end{thm}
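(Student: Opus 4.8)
The plan is to verify, one item at a time, the four bounds claimed for the data structure $D$ already defined above --- the Bernoulli sample $S''$ of the pruned set $S'=\{s\in S:m_s\le 4cn\}$ with inclusion probability $(c\log n)/n$, each sampled $s$ stored in the structure of \thmref{containment} with $k=m_s^{1+\alpha}$, and queries answered by $m''_p=(n/(c\log n))\cdot|\{s\in S'':p\in V_S(s)\}|$. Most of the individual estimates have already been carried out in the paragraphs preceding the statement; the proof mainly assembles them, so below I describe how I would organize the argument and flag the one step that takes real work.

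For item~4 (accuracy) I would first record the deterministic fact that $|S\setminus S'|\le n/c$, which follows from $\sum_{s\in S}m_s=2m\le4n^2$ and the fact that every discarded segment has $m_s>4cn$; hence $m_p-n/c\le m'_p\le m_p$ where $m'_p=|\{s\in S':p\in V_S(s)\}|$. Then $|\{s\in S'':p\in V_S(s)\}|$ is a sum of $m'_p$ independent Bernoulli$((c\log n)/n)$ random variables, so $m''_p$ is an unbiased estimator of $m'_p$, and a multiplicative Chernoff bound applied with relative deviation $\lambda=\delta n/m'_p$ around the mean $\mu=m'_p(c\log n)/n$ gives a failure probability $\exp(-\Omega(\lambda^2\mu))=n^{-\Omega(\delta^2 cn/m'_p)}\le n^{-\Omega(\delta^2 cn/m_p)}$; combining with the sandwich for $m'_p$ gives the stated inequality. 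The hard part of the whole proof is getting this tail estimate exactly right: the multiplicative Chernoff inequality behaves differently for $\lambda\le1$ and $\lambda>1$ (i.e.\ $\delta n\le m'_p$ versus $\delta n>m'_p$), so I would isolate that computation in \appref{absolute-a}, as the text indicates.

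For items~1--3 I would use linearity of expectation: in each expectation an element $s\in S'$ contributes a $(c\log n)/n$ fraction of its worst-case cost in the \thmref{containment} structure. The query-time sum $\sum_{s\in S'}\Oe(m_s^{(1/2)(1-\alpha)})$ is bounded via concavity of $x\mapsto x^{(1/2)(1-\alpha)}$ together with $\sum_{s\in S'}m_s=O(m)$ and $|S'|\le n$, yielding item~3. The space sum $\sum_{s\in S'}\Oe(m_s^{1+\alpha})$ is handled by maximizing a \emph{convex} function of $(m_s)_{s\in S'}$ subject to $\sum_{s} m_s=O(m)$ and $0\le m_s\le4cn$: the maximum sits at an extreme point where as many $m_s$ as possible equal $4cn$ --- at most $O(m/(cn))$ of them --- giving $O(m(cn)^\alpha)$ and hence item~1. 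For item~2, the partition-tree construction costs the same as the space bound, and summing the $\Oe((m_sn)^{2/3})$ cost of computing each $C_S(s)$ via \lemref{cover-algorithm} using concavity of $x\mapsto(xn)^{2/3}$ gives $\Oe(cm^{2/3})$; adding the two parts gives the claimed bound. Apart from the convex-maximization argument for the space bound, each of these is a one-line Jensen estimate, so beyond the Chernoff calculation I expect no obstacle.
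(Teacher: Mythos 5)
Your proposal follows the paper's proof essentially verbatim: the same sandwich $m_p - n/c \le m'_p \le m_p$ obtained by pruning segments with $m_s > 4cn$, the same Chernoff tail estimate (which the paper likewise relegates to \appref{absolute-a}), and the same linearity-of-expectation plus Jensen/extreme-point arguments for the query-time, space, and preprocessing sums. There are no gaps; your observation that the multiplicative Chernoff inequality needs separate treatment when $\delta n > m'_p$ is a legitimate point of care that the paper itself elides by quoting the bound in the form $\exp(-\Omega(\tau^2\mu))$ for all $\tau > 0$.
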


\noindent\textbf{Example:} For any constant $\delta$ there exists
a $c=c(\delta)$ such that taking $\alpha = 4/3$ gives a data
structure of size $\Oe((m/n)n^{1/3})=\Oe(n^{4/3})$ with query time
$\Oe((m/n)^{1/3})=\Oe(n^{1/3})$ and the structure approximates $m_p$
for any $p$ with an absolute error of at most $\delta n$ w.h.p..

As this example shows, the structure of \thmref{absolute-a} is quite
efficient for constant values of $\delta$. Unfortunately, the theorem
becomes weaker when using subconstant values of $\delta$.
This is because, to obtain meaningful error bounds, we require
$c=\Omega(1/\delta)$ and the running time of the query algorithm grows
linearly with $c$.

\begin{figure}
  \begin{center}
    \begin{tabular}{cc}
      \includegraphics{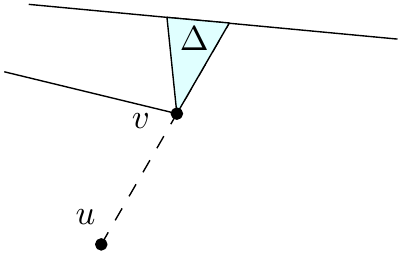} &
      \includegraphics{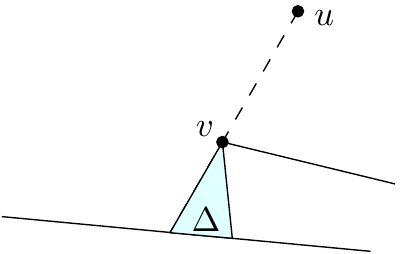} \\
    \end{tabular}
  \end{center}
  \caption{The two cases in which the ordered pair $(u,v)$ generates the
           triangle $\Delta$.}
  \figlabel{uv-sample}
\end{figure}

\subsubsection{Solution 2: Sampling $C(S)$}
\seclabel{VizCountingAbs2}

Next we consider a different data structure that is also based on
random sampling. Rather than sample segments of $S$, we instead sample
triangles of $C(S)$ and use \lemref{count} to bound the quality of the
approximation.  This results in a more efficient space/accuracy tradeoff
than that provided by \thmref{absolute-a}.  The cost of this savings
in space is that we obtain a relative approximation bound when $m_p$
is large and an absolute approximation bound when $m_p$ is small. For
the application proposed by Fischer et al. \cite{fhjmz08,fhjmz09} this
is an acceptable approximation bound.

Consider the set $C(S)$ of triangles described in \lemref{count}.
For any point $p\in\R^2$, the number, $m'_p$, of triangles in $C(S)$ that
contain $p$ is a 2-approximation of the number, $m_p$,  of segments in $S$
that are visible from $p$.  In particular
\begin{equation}
  m_p \le m'_p \le 2m_p.  \enspace . \eqlabel{mpmp}
\end{equation}

Our strategy is to approximate $m'_p$ by sampling elements of
$C(S)$.  The easiest way to proceed would be to select a Bernoulli sample
by sampling each element of $C(S)$ independently with probability $(c\log
n)/n$.  This would require enumerating the elements of
$C(S)$, of which there are $\Theta(m)$, yielding a construction time that
is $\Omega(n^2)$ in the worst case.  Instead, we use a different sampling
strategy based on the rejection method that avoids computing $C(S)$. \smallbreak

\noindent\textbf{Sampling $C(S)$.}
Our goal is to obtain a random multiset $C''(S)\subseteq C(S)$ of size
roughly $c(m/n)\log n$.  To achieve this, we repeat the following
procedure $4cn\log n$ times:  We select two points $u$ and $v$ at random,
with replacement, from the $2n$ endpoints of $S$.  Note that there are
$4n^2$ ways of doing this.  Next, using $O(1)$ ray sweeping queries,
we determine if $q$ is a vertex of some triangle $\Delta\in C(S)$ that
has an edge collinear with $\overrightarrow{uv}$ and that lies to the
left of $\overrightarrow{uv}$ (see \figref{uv-sample}).  Note that,
for any $\Delta\in C(S)$, there is exactly one pair $(u,v)$ for which
this is true.\footnote{The pair $(u,v)$ generates the triangle $\Delta$
precisely if $u$ is below $v$ and $v$ is the right endpoint of its
segment or $u$ is above $v$ and $v$ is the left endpoint of its segment.}
Therefore, if this test is affirmative then $\Delta$ is an element drawn
uniformly at random from $C(S)$ and we add it to our sample $C''(S)$.
The probability that we increase the size of $C''(S)$ this way is
$m_S/(4n^2)$, where $m_S=|C(S)|=O(m)$. \smallbreak


\noindent\textbf{Space, preprocessing time, and query time.}
To compute $C''(S)$ efficiently we use the ray-sweeping data structure
described in the proof of \lemref{cover-algorithm}.  Each sampling
step requires $O(1)$ ray-sweeping queries, which can be done in
$\Oe(n/\sqrt{\ell})$ time after $\Oe(\ell)$ preprocessing.  Thus, the
expected time required to build the ray-sweeping data structure and
perform $4cn\log n$ sampling steps is
$
    \Oe(\ell + (cn\log n)(n/\sqrt{\ell})
       = \Oe(\ell + cn^2/\sqrt{\ell})
       = \Oe(c^{2/3}n^{4/3})
$
for $\ell=c^{2/3}n^{4/3}$.

Each sampling step adds an element to $C''(S)$ with probability $m_S/(4n^2)$.
So, the number of samples in $C''(S)$ is a binomial random
variable with parameters $4cn\log n$ and $m_S/(4n^2)$ and the expected
size of $C''(S)$ is therefore $c(m_S/n)\log n$. Using Chernoff's
Bounds, we find that the probability that the size of $C''(S)$
exceeds $ac(m_S/n)\log n$ is at most $n^{-\Omega(a)}$ for any $a>1$.
This concentration result ensures that when building the data structure
of \thmref{triangle-inclusion} on the elements of $C''(S)$ the expected
size, preprocessing time, and query time of the resulting structure are
$\Oe(k)$, $\Oe(k)$, and $\Oe((m/n)/\sqrt{k})$, respectively, for any
$m/n\le k\le (m/n)^2$.

To summarize, for any $k$ with $m/n \le k \le (m/n)^2$, the above
sampling procedure runs in $\Oe(c^{2/3}n^{4/3} + k)$ expected time and
produces a data structure of $\Oe(k)$ expected size, that can answer
queries in $\Oe((m/n)/\sqrt{k})$ expected time.  All that remains is to
calibrate and check the accuracy of the results provided by the data
structure. \smallbreak

\noindent\textbf{Estimating $m'_p$}.
Recall that our goal is to estimate $m'_p$, the number of elements of $C(S)$
that contain the query point $p$, as $m'_p$ is a 2-approximation to the
number of segments of $S$ visible from $p$.  Let
\[
    m''_p = (n/(c\log n))\cdot |\{ \Delta\in C''(S) : p\in \Delta\}|. \enspace .
\]
(Note that computing $m''_p$ does not require knowing the value
$m_S=|C(S)|$.)  Each step of the sampling procedure finds an element
$\Delta\in C(S)$ such that $p\in\Delta$ with probability exactly $m'_p
/ (4n^2)$.  Since the sampling procedure runs for $4cn\log n$ steps, this
implies that the number of triangles in $C''(S)$ that contain $p$ is a
binomial random variable with parameters $4cn\log n$ and $m'_p/(4n^2)$.
Therefore,
\[
    \E[m''_p] = (n/(c\log n))(4cn\log n)(m'_p/4n^2) = m'_p \enspace .
\]
That is, $m''_p$ is an unbiased estimator of $m'_p$.  Furthermore, applying
Chernoff's bounds to the underlying binomial random variable (see
\appref{absolute-b}),
we find that
\[
  \Pr\{|m''_p - m'_p| \ge \delta n\} \le n^{-\Omega(\delta^2cn/m_p)}
\]
for any $\delta > 0$.  Combining this with \eqref{mpmp} we obtain
\[
   \Pr\{m_p-\delta n \le m''_p \le 2m_p+\delta n\}
     \ge 1-n^{-\Omega(\delta^2 cn/m_p)}  \enspace .
\]
This establishes the accuracy of the data structure and completes the proof
of our last theorem:

\begin{thm}\thmlabel{absolute-b}
  Let $S$ be a set of $n$ disjoint line segments whose visibility
  graph has $m$ edges and let $c>1$ and $0 < \alpha < 1$ be real valued
  parameters.  There exists a data structure $D$ that can approximate
  the number of segments of $S$ visible from any query point $p$ such that:
  \begin{enumerate}
   \item $D$ has expected size $\Oe((cm/n)^{1+\alpha}) =
          \Oe((cn)^{1+\alpha)})$,
   \item $D$ can be constructed in time
          $\Oe(c^{2/3}n^{4/3} + (cn)^{1+\alpha})$,
   \item $D$ can perform a query in $\Oe((cm/n)^{(1/2)(1-\alpha)}) =
          \Oe((cn)^{(1/2)(1-\alpha)})$ time, and
   \item for any $\delta > 0$, when querying $D$ with a point $p$
          that sees $m_p$ points of $S$, $D$ returns a
          value $m''_p$
          that satisfies $m_p - \delta n \le m''_p \le 2m_p + \delta n$
          with probability at least $1-n^{-\Omega(\delta^2 cn/m_p)}$.
  \end{enumerate}
\end{thm}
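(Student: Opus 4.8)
The plan is to assemble the ingredients already developed in this subsection. The data structure $D$ is the triangle-inclusion counting structure of \thmref{triangle-inclusion}, built on the random multiset $C''(S)$ produced by the rejection sampler described above, and tuned with the parameter choice $k=(cm/n)^{1+\alpha}$. Recall that the sampler performs $4cn\log n$ trials, each accepting a triangle drawn uniformly from $C(S)$ with probability $m_S/(4n^2)$, so $|C''(S)|$ is binomial with mean $c(m_S/n)\log n=\Oe(cm/n)$ and, by the Chernoff concentration noted above, exceeds a sufficiently large constant multiple of its mean with probability $n^{-\Omega(1)}$. Hence \thmref{triangle-inclusion}, applied to a point set of size $\Oe(cm/n)$ with parameter $k$, has expected space and preprocessing $\Oe(k)$ and expected query time $\Oe((cm/n)/\sqrt{k})$; note that $k=(cm/n)^{1+\alpha}$ lies in the admissible range for such a structure since $0<\alpha<1$.

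First I would record the three quantitative bounds by substituting $k=(cm/n)^{1+\alpha}$. The space is $\Oe(k)=\Oe((cm/n)^{1+\alpha})$, and since $2m=\sum_{s\in S}m_s\le 4n^2$ gives $cm/n\le 2cn$, this is $\Oe((cn)^{1+\alpha})$, yielding item~1. The construction cost is the sampler's $\Oe(c^{2/3}n^{4/3})$ (obtained by choosing $\ell=c^{2/3}n^{4/3}$ for the ray-sweeping structure of \lemref{cover-algorithm}) plus the $\Oe(k)$ needed to build the partition tree on $C''(S)$, i.e.\ $\Oe(c^{2/3}n^{4/3}+(cn)^{1+\alpha})$, giving item~2. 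The query time is $\Oe((cm/n)/\sqrt{k})=\Oe((cm/n)^{(1/2)(1-\alpha)})=\Oe((cn)^{(1/2)(1-\alpha)})$, giving item~3.

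Next I would settle the accuracy, item~4. The value returned on a query with point $p$ is $m''_p=(n/(c\log n))\cdot|\{\Delta\in C''(S):p\in\Delta\}|$, and, as computed above, $|\{\Delta\in C''(S):p\in\Delta\}|$ is binomial with parameters $4cn\log n$ and $m'_p/(4n^2)$, where $m'_p$ is the number of triangles of $C(S)$ containing $p$; hence $m''_p$ is an unbiased estimator of $m'_p$. The Chernoff bound from \appref{absolute-b} then gives $\Pr\{|m''_p-m'_p|\ge\delta n\}\le n^{-\Omega(\delta^2cn/m'_p)}\le n^{-\Omega(\delta^2cn/m_p)}$, the last step using $m'_p\le 2m_p$. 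Combining this with the deterministic sandwich $m_p\le m'_p\le 2m_p$ of \lemref{count}, taken in its partially open refinement from the Remark so that it holds for all $p$, yields $\Pr\{m_p-\delta n\le m''_p\le 2m_p+\delta n\}\ge 1-n^{-\Omega(\delta^2cn/m_p)}$, as required.

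The only genuinely delicate point, which I would state carefully rather than treat as routine, is the correctness of the rejection sampler: that the $4n^2$ ordered pairs $(u,v)$ of endpoints of $S$ are in bijection with $C(S)$ via the rule that $(u,v)$ names the triangle of $C(S)$ having an edge collinear with $\overrightarrow{uv}$ lying to its left, so that each accepted triangle is exactly uniform on $C(S)$ and each trial has $p$ in its accepted triangle with probability exactly $m'_p/(4n^2)$; and that a constant number of ray-sweeping queries suffices to decide, for a given $(u,v)$, whether the named triangle exists and to report it. This correspondence, together with the two-case analysis of \figref{uv-sample}, is exactly what makes $m''_p$ an honest unbiased estimator and lets the construction bypass any explicit enumeration of $C(S)$; everything else is bookkeeping over bounds already in hand.
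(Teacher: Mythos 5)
Your proposal is correct and follows essentially the same route as the paper: build the triangle-inclusion structure of \thmref{triangle-inclusion} on the rejection-sampled multiset $C''(S)$ with $k=(cm/n)^{1+\alpha}$, read off the space, preprocessing, and query bounds, and derive the accuracy guarantee by combining the Chernoff concentration of the unbiased estimator $m''_p$ of $m'_p$ with the sandwich $m_p\le m'_p\le 2m_p$ of \lemref{count}. The point you single out as delicate --- the bijection between ordered endpoint pairs $(u,v)$ and triangles of $C(S)$ that makes each accepted triangle uniform and each trial hit $p$'s triangles with probability exactly $m'_p/(4n^2)$ --- is precisely the point the paper also relies on (and relegates to a footnote), so no substantive difference remains.
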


\noindent\textbf{Example.}  Taking $c=dn^{1/3}$, for a large constant $d$,
and $\alpha=0$, we get a data structure of size $\Oe(n^{4/3})$ that can
be constructed in time $\Oe(n^{14/9})$ and that can, in $\Oe(n^{2/3})$
time, effectively distinguish between viewpoints $p$ where $m_p \ll
n^{2/3}$ and viewpoints $p$ where $m_p \gg n^{2/3}$.

\section{Summary and Conclusions}
\seclabel{conclusions}


Many open questions remain.  The data structure for testing if a point is
in $V_S(s)$ for a segment $S$ (\thmref{containment}) is near-optimal, at
least assuming an $\Omega(n^{4/3})$ lower-bound for Hopcroft's Problem.
However, it is difficult to say if the data structures for approximate
visibility counting are close to optimal.  Our solutions reduce visibility
counting to the problem of computing the depth of a query point in an
arrangement of $O((m/n)\log n)$ (\thmref{absolute-a} and
\thmref{absolute-b}) or $O(m)$ (\corref{relative}) triangles.  In both
cases, it would be sufficient to give a relative approximation for the
depth of the query point.  Unfortunately, without some additional
assumptions (such as fatness) about the triangles, there is currently no
good solution to this problem.

The results in the current paper consider the problem of \emph{planar}
visibility counting, where $S$ is a set of disjoint line segments in
$\R^2$.  Of course, modern virtual environments are often 3-dimensional.
Many of these environments are just barely 3-dimensional in the sense
that they consist of a constant number of 2-dimensional layers that can
be handled using the data structures presented in the current paper.
However, ultimately we would like to develop data structures that store
a set $S$ of disjoint triangles in $\R^3$ and can approximately count
the number of elements of $S$ (at least partly) visible from a query
point $p\in \R^3$.

\section*{Acknowledgements}

This work was done while the second author was a visiting researcher at NICTA
and the University of Sydney.  The author is grateful for the hospitality
and funding provided by both institutions.

\bibliographystyle{plain}
\bibliography{viscover}

\begin{thebibliography}{10}

\bibitem{ae99}
P.~K. Agarwal and J.~Erickson.
\newblock Geometric range searching and its relatives.
\newblock In B.~Chazelle, J.~E. Goodman, and R.~Pollack, editors, {\em Advances
  in Discrete and Computational Geometry}, volume 223 of {\em Contemporary
  Mathematics}, pages 1--56. American Mathematical Society Press, 1999.

\bibitem{as08}
N.~Alon and J.~H. Spencer.
\newblock {\em The Probabilistic Method}.
\newblock John Wiley {\&} Sons, Hoboken, third edition, 2008.

\bibitem{agtz02}
B.~Aronov, L.~J. Guibas, M.~Teichmann, and L.~Zhang.
\newblock Visibility queries and maintenance in simple polygons.
\newblock {\em Discrete {\&} Computational Geometry}, 27(4):461--483, 2002.

\bibitem{a85}
T.~Asano.
\newblock An efficient algorithm for finding the visibility polygon for a
  polygonal region with holes.
\newblock {\em IEICE Transactions}, E68-E:557--589, 1985.

\bibitem{blm02}
P.~Bose, A.~Lubiw, and J.~I. Munro.
\newblock Efficient visibility queries in simple polygons.
\newblock {\em Computational Geometry Theory and Applications}, 23(3):313--335,
  2002.

\bibitem{c99}
T.~M. Chan.
\newblock Geometric applications of a randomized optimization technique.
\newblock {\em Discrete {\&} Computational Geometry}, 22(4):547--567, 1999.

\bibitem{bcko08}
M.~de~Berg, O.~Cheong, M.~van Kreveld, and M.~Overmars.
\newblock {\em Computational Geometry Algorithms and Applications}.
\newblock Springer, third edition, 2008.

\bibitem{e96}
J.~Erickson.
\newblock New lower bounds for {H}opcroft's problem.
\newblock {\em Discrete {\&} Computational Geometry}, 16(4):389--418, 1996.

\bibitem{e99b}
J.~Erickson.
\newblock Lower bounds for linear satisfiability problems.
\newblock {\em Chicago Journal of Theoretical Computer Science}, 1999, 1999.

\bibitem{fhjmz08}
M.~Fischer, M.~Hilbig, C.~J{\"a}hn, F.~Meyer auf~der Heide, and M.~Ziegler.
\newblock Planar visibility counting.
\newblock {\em CoRR}, abs/0810.0052, 2008.

\bibitem{fhjmz09}
M.~Fischer, M.~Hilbig, C.~J{\"a}hn, F.~Meyer auf~der Heide, and M.~Ziegler.
\newblock Planar visibility counting.
\newblock In {\em Proceedings of the 25th European Workshop on Computational
  Geometry (EuroCG 2009)}, pages 203--206, 2009.

\bibitem{gm91}
S.~K. Ghosh and D.~Mount.
\newblock An output sensitive algorithm for computing visibility graphs.
\newblock {\em SIAM Journal on Computing}, 20:888--910, 1991.

\bibitem{gmr97}
L.~J. Guibas, R.~Motwani, and P.~Raghavan.
\newblock The robot localization problem.
\newblock {\em SIAM Journal on Computing}, 26(4):1120--1138, 1997.

\bibitem{m92}
J.~Matou\v{s}ek.
\newblock Efficient partition trees.
\newblock {\em Discrete {\&} Computational Geometry}, 8:315--334, 1992.

\bibitem{m02}
J.~Matou\v{s}ek.
\newblock {\em Lectures on Discrete Geometry}, volume 212 of {\em Springer
  Graduate Texts in Mathematic}.
\newblock Springer, 2002.

\bibitem{o87}
J.~O'Rourke.
\newblock {\em Art Gallery Theorems and Applications}.
\newblock Oxford University Press, 1987.

\bibitem{pv96}
M.~Pocchiola and G.~Vegter.
\newblock The visibility complex.
\newblock {\em International Journal of Computational Geometry and
  Applications}, 6(3):279--308, 1996.

\bibitem{so84}
S.~Suri and J.~O'Rourke.
\newblock Worst-case optimal algorithms for constructing visibility polygons
  with holes.
\newblock In {\em Proceedings of the Second Annual Symposium on Computational
  Geometry (SCG~84)}, pages 14--23, 1984.

\bibitem{zg-ecqpv-05}
A.~Zarei and M~Ghodsi.
\newblock Efficient computation of query point visibility in polygons with
  holes.
\newblock In {\em Proceedings of the 21st Annual ACM Symposium on Computational
  Geometry (SCG~2005)}, 2005.

\end{thebibliography}

\appendix

\section*{Appendix}

\section{Accuracy Bound for \thmref{absolute-a}}
\applabel{absolute-a}

In this appendix, we derive the error bound of the data structure of
\thmref{absolute-a}.  To do this, we will use a version of Chernoff Bounds
for binomial random variables \cite[Appendix~A.1]{as08} which states
that, for a binomial random variable $B$ with mean $\mu$,
\begin{equation}
  \Pr\{|B-\mu| \ge \tau\mu\}
     \le \exp(-\Omega(\tau^2\mu)) \enspace .
  \eqlabel{chernoff}
\end{equation}
for any $\tau > 0$.

Let $B$ be the number of samples in $S''$ visible from $p$, let $x=m'_p$
be the number of segments in $S'$ visible from $p$, and let $t=(c\log
n)/n$.  Then $B$ is a binomial$(x,t)$ random variable with expectation
$\mu = xt$.  We have that $|m''_p-m'_p|\ge \delta n$ if and only if
$|B-\mu|\ge t\delta n$.  Taking $\tau = t\delta n/\mu$ and applying
\Eqref{chernoff} we obtain

\begin{eqnarray*}
   \Pr\{ |m''_p - m'_p| \ge \delta n \}
   & = & \Pr\{ |B-\mu| \ge t \delta  n \} \\
   & \le & \exp(-\Omega(\left(t \delta  n /\mu\right)^2 \mu)) \\
   &  = & \exp(-\Omega((t \delta  n)^2/\mu)) \\
   &  = & \exp(-\Omega((\delta n)^2 t/x)) \\
   &  = & \exp(-\Omega((\delta^2 cn\log n)/x)) \\
   &  = & n^{-\Omega(\delta^2 cn/x))} \\
   &  = & n^{-\Omega(\delta^2 cn/m'_p))} \\
   &  = & n^{-\Omega(\delta^2 cn/m_p))} \enspace ,
\end{eqnarray*}
as required.

\section{Accuracy Bound for \thmref{absolute-b}}
\applabel{absolute-b}

Let $B$ be the number of sample triangles in $C''(S)$ that contain $p$,
let $x=m'_p$ be the number of triangles in $C(S)$ that contain $p$, and
let $t=x / (4n^{2})$.  Then $B$ is a binomial$(4cn\log n,t)$ random
variable with expected value $\mu = 4tc\log n = (cx\log n) / n$.

We have that $|m''_p-m'_p|\ge \delta n$ if and only if $|B-\mu|\ge \delta c
\log n$.  Taking $\tau = (\delta c\log n)/\mu$ and applying
\Eqref{chernoff} we
obtain
\begin{eqnarray*}
  \Pr\{ |m''_p - m'_p| \ge \delta n \}
   & = & \Pr\{ |B-\mu| \ge \delta c \log n  \} \\
   & = & \exp(-\Omega(((\delta c\log n) / \mu)^2\mu)) \\
   & = & \exp(-\Omega((\delta c\log n)^2/\mu)) \\
   & = & \exp(-\Omega(\delta^2(cn\log n)/x)) \\
   & = & n^{-\Omega(\delta^2cn/x))} \\
   & = & n^{-\Omega(\delta^2cn/m'_p))} \\
   & = & n^{-\Omega(\delta^2cn/m_p))}  \enspace ,
\end{eqnarray*}
as required.

\end{document}